\begin{document}

\SetKwFor{ForAll}{forall}{do}{end forall}

\newtheorem{ru}{Rule}
\newcommand{\vd}[1]{\nn(#1)}
\newcommand{\vpd}[1]{\nnp(#1)}
\newcommand{\vnd}[1]{\nnn(#1)}
\newcommand{\mynote}[1]{\textcolor{red}{Check: {#1}}}
\newcommand{\rednote}[1]{\textcolor{red}{{#1}}}
\newcommand{\remove}[1]{}
\newcommand{\Domination}{{{\sf Dom}}}
\newcommand{\Clique}{{{\sf Clique}}}
\newcommand{\Coloring}{{{\sf Color}}}
\newcommand{\OPT}{{\tt OPT}}
\def \t {{{t}}}
\def\V{{\cal V}}
\def\nd{{nd}}
\def\dnd{{ itp}}
\def\tD{{\tilde{D}}}
\def\tC{{\tilde{C}}}
\def\I{{\cal I}}
\def\Ho{{H^{(0)}}}
\def\H1{{H^{(1)}}}
\def\Hii{{H^{(i)}}}
\def\Hi1{{H^{(i-1)}}}
\def\Hd{{H^{(d)}}}

\def \Eq {{Equitable Coloring}}
\def \Bin {{Bin-Packing}}

\def \argmin{arg\,min}
\def \argmax{arg\,max}

\newcommand{\com}[1]{$O(#1+poly(n))$}

\newtheorem{proposition}{Proposition}
\newtheorem{corollary}{Corollary}
\newtheorem{fact}{Fact}
\newtheorem{theorem}{Theorem}
\newtheorem{example}{Example}
\newtheorem{definition}{Definition}
\newtheorem{claim}{Claim}
\newtheorem{problem}{Problem}
\newtheorem{lemma}{Lemma}
\newtheorem{remark}{Remark}
\newtheorem{proper}{Property}

\date{}
\pagestyle{plain}

\title{Iterated Type Partitions
}
%
%

\author{ G. Cordasco \\ \small University of Campania ``L.Vanvitelli'', Italy
 \and 
 L. Gargano \\ \small University of Salerno, Italy\\
\and
 A. A. Rescigno \\ \small University of Salerno, Italy}

\maketitle              
\begin{abstract}
This paper deals with the complexity of some natural graph problems when parametrized by {measures that are restrictions of} clique-width, such as modular-width and neighborhood diversity.
The main contribution of this paper is to introduce a novel parameter, called iterated type partition, that 
can be computed in polynomial time and nicely places between modular-width and neighborhood diversity. 
We prove that the Equitable Coloring problem is W[1]-hard when parametrized by the iterated type partition. 
This result extends to  modular-width, answering an open question about the possibility to have FPT algorithms for Equitable Coloring when parametrized by modular-width. 
On the contrary, we show that the Equitable Coloring problem is instead FPT when parameterized by neighborhood diversity.
Furthermore, we present simple and fast FPT algorithms parameterized by iterated type partition that provide optimal solutions for several graph problems; in particular this paper presents algorithms for the Dominating Set, the Vertex Coloring and the Vertex Cover problems. While the above problems are already known to be FPT with respect to modular-width, the novel algorithms are both  simpler and more efficient: For the Dominating set and Vertex Cover problems, our algorithms output an optimal set in time \com{2^t}, while for the Vertex Coloring problem, our algorithm outputs an optimal set in time \com{t^{2.5t+o(t)}\log n},
where $n$ and $t$ are the size and the iterated type partition of the input graph, respectively.\\ 		

\noindent
\textbf{keywords:} Parameterized Complexity, Fixed-parameter tractable algorithms,  W[1]-hardness, Neighborhood Diversity, Modular-width.
\end{abstract}

\section{Introduction}
Some NP-hard problems can be solved by algorithms that are exponential only in the size of a fixed parameter while they are polynomial in the size of the input. Such  problems are   called  fixed-parameter tractable, because the problem can be solved efficiently for small values of the  parameter  \cite{DF,N}. Formally, a parameterized problem with input size $n$ and  parameter $\t$ is called {\em fixed parameter tractable (FPT)} if it can be solved in time $f(\t) \cdot n^c$, where $f$ is a function only depending on $\t$ and $c$ is a constant.


An important quality of a parameter is that is is easy to compute.
Unfortunately there are several parameters  whose 
computation is an  NP-hard problem.  
As an example computing treewidth, rankwidth, and vertex cover are all NP-hard problems but 
they are computable in FPT time when their respective parameters are bounded;
moreover,  the parameterized complexity of computing the clique-width of a graph exactly is still an open problem \cite{DK}.

We     start from  two recently introduced parameters:
modular-width \cite{GLO}  and neighborhood diversity \cite{L}.
Both parameters    received  
 much attention \cite{ALM+,CBFGR,CGRV,CDP,DKT16,FGKKK,FLMT,G,GR,CGRV18} also 
due to their   property  of being computable  in polynomial time 
\cite{GLO,L}.  

As the main contribution of this paper we introduce a novel parameter
called  Iterated Type Partition, which nicely places between the two above parameters
 and allows to obtain new   algorithms and hardness results.

\subsection{Modular-width}\label{sec-mw}
The notion of modular decomposition of graphs was introduced by Gallai
in \cite{Gallai}, as a tool to define  hierarchical decompositions of graphs. It has been recently considered in \cite{GLO} to define the modular-width parameter in the area of parameterized computation.

{Consider  graphs   obtainable by an algebraic expression that uses 
the 
operations:}
\begin{itemize}
\item[1)] Creation of  an isolated vertex.
\item[2)] Disjoint union of 2 graphs, 
 i.e.,  the
graph with vertex set $V (G_1 ) \cup V ( G_2 )$ and edge set $E( G_1 )\cup E( G_2 )$.
\item[3)] Complete join of 2 graphs, 
i.e., 
the graph with vertex set $V (G_1 ) \cup V ( G_2 )$ and edge set 
$E( G_1 )\cup E( G_2 )\cup \{ (v, w)\  :\  v \in V ( G_1 ),\  w\in  V(G_2) \}$.
\item[4)] Substitution operation $G(G_1,\dots,G_m)$ 
 of the vertices $v_1,\dots,v_m$  of $G$ by
the  modules  $G_1,\dots,G_m$, i.e.,  the graph with vertex
set
$\bigcup_{1\leq \ell\leq m} V(G_\ell)$ and edge set

\centerline{
$\bigcup_{1\leq \ell \leq  m} E(G_\ell)\cup \{(u, v)\  : \ u \in V (G_i), \ v \in
V (G_j ), \  (v_i, v_j) \in E(G) \}. $
}
\end{itemize}
\smallskip
\remove{Let $A$ be an algebraic expression that uses only the operations 1)--4). The width of $A$  is 
defined as the maximum number of operands used by any occurrence of the operation 4) in $A$. The \textit{modular-width} of a
graph $G$, denoted $mw(G)$, is the least integer $m$ such that $G$ can be obtained
from such an algebraic expression of width at most $m$. It is well-known that  an algebraic
expression of width $mw(G)$ can be constructed in linear time \cite{TCH+}.}
{As defined in \cite{GLO}, the \textit{modular-width} of a
graph $G$, denoted $mw(G)$, is the least integer $m$ such that $G$ can be obtained by using  only the operations 1)--4) (in any number and order) and where each operation 4) has at most $m$ modules.}

\subsection{Neighborhood diversity}\label{sec-nd}
Given a graph $G=(V,E)$,  two nodes $u,v\in V$  have the same  
{\em type} iff $N(v) \setminus \{u\} = N(u) \setminus \{v\}$.
The {\em neighborhood diversity} of a graph $G$, introduced by Lampis in \cite{L} and denoted by {\em \nd}$(G)$, is the minimum number $\t$ of sets in a partition  $V_1,V_2, \ldots, V_\t$, of the node set $V$, 
such that all the nodes in  $V_i$ have the same type, 
for  $i\in [t]\footnote{
For a positive integer $n$, we use 
$[n]$ to denote the set of the first $n$ integers, that is $[n]=\{1,2, \ldots, n\}$.}$.
\\
The family 
$\V=\{V_1,V_2, \ldots, V_\t\}$ is called  the {\em type partition} of $G$.

Let $G=(V,E)$ be a graph with type partition  $\V=\{V_1,V_2, \ldots, V_\t\}$. 
By  definition, 
 each $V_i$ induces either a {\em clique} or an {\em independent set} in $G$.
We treat singleton sets in the type partition as cliques.
For each  $V_i,V_j\in \V$, we get that  either each node in 
$V_i$ is a neighbor of each node in $V_j$ or no node in $V_i$ 
has a neighbor  in $V_j$.
Hence, between each pair $V_i,V_j\in \V$, there is either a complete bipartite graph  
or  no edges at all.

\smallskip

Starting from a graph $G$ and its type partition $\V=\{V_1, \ldots, V_\t\}$,  
we can see each  element  of $\V$ as a {\em metavertex} of a new graph $H$,  called  the {\em type graph} of $G$,  with  \\
-   $V(H)=\{1,2,\cdots,\t\}$ \\
-  $E(H)=\{(x,y) \ | \ x \neq y \mbox{ and for each $u\in V_x$, $v\in V_y$  it holds that $(u,v)\in E(G)$ }\}$.
\\
We say that $G$ is a {\em base graph}  if it matches its type graph, 
that is, the type partition of $G$ consists of singletons, each representing 
a  node in $V(G)$, and $nd(G)=|V(G)|$. 

\smallskip
We introduce a new graph parameter, which  generalizes   neighborhood diversity.
Given a graph $G$, the {\em Iterated Type Partition} of  $G$ 
is defined  by iteratively constructing type graphs until a base graph is obtained. 
\begin{definition}\label{deep-nd}
Given a graph $G=(V,E)$,  let $\Ho=G$ and 
 $\Hii$ denote the  type graph of $\Hi1$, for $i\geq 1$.  
Let $d$ be the smallest  integer such that $\Hd$ is a base graph.
The {\em iterated type partition} of $G$, denoted by $\dnd(G)$, 
is the number of nodes of  $\Hd$.
The sequence of graphs $\Ho=G, \H1, \cdots, \Hd$ is called the  {\em type graph sequence} of $G$ and $\Hd$ is  denoted as the base graph of $G$.
\end{definition}
An example of a graph and its type graph sequence is given in Fig. \ref{fig:img1}. 
\begin{figure}[tb!]
	\centering
	\includegraphics[width=0.9\textwidth]{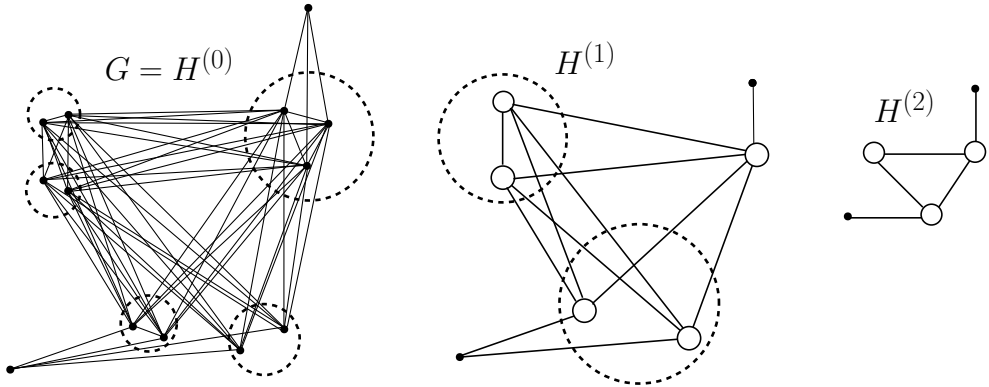}
		\caption{A  graph $G$ with iterated type partition $5$ and its corresponding type graph sequence: $G=H^{(0)},H^{(1)},H^{(2)}.$ Dashed circles group nodes having the same type. }
	\label{fig:img1}
\end{figure}
It is well-known that determining $nd(G)$ and the corresponding type
 partition, can be done in polynomial
  time \cite{L}. As an immediate consequence, we have that 
\begin{theorem}
{There exists a polynomial time algorithm, which 
 given a graph $G=(V,E)$, finds the type graphs sequence of $G$ and consequently the value $itp(G)$.}
\end{theorem}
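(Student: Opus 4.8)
The plan is to construct the type graph sequence $\Ho = G, \H1, \ldots, \Hd$ iteratively, reusing at each step the polynomial-time procedure for neighborhood diversity from \cite{L}. Concretely, I would maintain a current graph, initialized to $\Ho = G$, and repeat the following step: given $\Hi1$, run the algorithm of \cite{L} to obtain both $nd(\Hi1)$ and the corresponding type partition $\{V_1, \ldots, V_k\}$ with $k = nd(\Hi1)$; then build $\Hii$ on vertex set $[k]$ by declaring, for each pair $x \neq y$, that $(x,y) \in E(\Hii)$ iff $(u,v) \in E(\Hi1)$ for an arbitrarily chosen $u \in V_x$, $v \in V_y$. This last definition does not depend on the choice of $u$ and $v$ because, as recalled above, between two distinct type classes there is either a complete bipartite graph or no edge at all. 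The loop stops as soon as the current graph is a base graph, which is recognized by the test $nd(\Hi1) = |V(\Hi1)|$ (equivalently, all classes $V_i$ are singletons); at that point we set $\Hd := \Hi1$ and return it, so that $itp(G) = |V(\Hd)|$.

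For correctness I would first argue that the process halts and that $d$ is well defined. Merging all vertices of a type class into a single metavertex gives $|V(\Hii)| = nd(\Hi1) \le |V(\Hi1)|$, and equality holds exactly when every class is a singleton, i.e.\ when $\Hi1$ is already a base graph. Hence, until a base graph is produced, the number of vertices strictly decreases; since it starts at $|V(G)|$ and stays a positive integer, a base graph $\Hd$ with $d \le |V(G)|-1$ is reached (and $d = 0$ when $G$ itself is a base graph). This matches Definition~\ref{deep-nd}; note also that the type graph of a base graph is (isomorphic to) that graph, so the construction is consistent with the option of taking further type graphs. For the running time, one iteration makes a single call to the algorithm of \cite{L} on a graph with at most $|V(G)|$ vertices, which is polynomial, plus $O(k^2)$ adjacency look-ups to assemble $\Hii$; since there are at most $|V(G)|$ iterations, the whole algorithm runs in time polynomial in $|V(G)|$.

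The only delicate point, and the one I would make sure to state explicitly, is the termination bound: everything hinges on the observation that $|V(\Hii)| < |V(\Hi1)|$ strictly as long as $\Hi1$ is not a base graph, which is what caps the number of iterations — and therefore the number of invocations of the subroutine of \cite{L} — at $|V(G)|$. Beyond that, the argument is just bookkeeping layered on top of the cited polynomial-time result for neighborhood diversity.
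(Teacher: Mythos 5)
Your proposal is correct and follows the same route as the paper, which simply observes that the type partition and $nd$ are computable in polynomial time by the algorithm of \cite{L} and treats the theorem as an immediate consequence of iterating that procedure. Your write-up merely makes explicit the details the paper leaves implicit, in particular the termination bound $|V(H^{(i)})| < |V(H^{(i-1)})|$ whenever $H^{(i-1)}$ is not a base graph, which caps the number of iterations at $|V(G)|$.
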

\subsection{Relation with other parameters}

In this section we analyze the relations between  the   iterated type partition \ parameter and some  other well known parameters.

We  notice that, as an iteration  of  neighborhood diversity, the new parameter satisfies 
\begin{equation}\label{nd-itp}
itp(G)\leq nd(G).
\end{equation}
Actually   $itp(G)$ can be much smaller than  $nd(G).$ 
\remove{
Moreover, the family $\mathcal{F}_{\dnd}(k)$ of graphs  having  iterated type partition\ equal to $k$   is much larger than the family $\mathcal{F}_{nd}(k)$ of the  graphs  having neighborhood diversity equal to $k$.
Indeed a graph of the family $\mathcal{F}_{\dnd}(k)$ can be obtained by the following procedure:
\begin{itemize}
	\item choose a positive integer $d$ and a \rednote{connected} base graph  $H^{(d)}$ having $k$ nodes;
	\item For $i= d,d-1, \ldots, 1$
	\begin{itemize}
	\item	replace each  node of $H^{(i)}$  by a graph, which is an independent set of at least two nodes (when $i=d-j$ and $j$ is odd) or a clique of at least two nodes (when $i=d-j$ and $j$ is even). 
	\item for each edge of $H^{(i)}$, put a complete bipartite graph between the  nodes of the graphs that replace the endpoints of  the  edge;  
	\item rename the obtained graph as $H^{(i-1)}$
	\end{itemize}
\end{itemize}
The final graph $G=H^{(0)}$ belongs to  $\mathcal{F}_{\dnd}(k)$.
Notice  that given a graph $G$ from the family $\mathcal{F}_{\dnd}(k)$ the value of $nd(G)$ corresponds to the number of nodes in $H^{(1)}$ while the value of $itp(G)$ corresponds to the number $k$ of nodes in $H^{(d)}$.
 }
Indeed consider  the following: 
\begin{itemize}
	\item Choose a positive integer $d$ and  a {connected} base graph  $H^{(d)}$ having $k$ nodes;
	\item For $i= d,d-1, \ldots, 1$, a new  graph  $H^{(i-1)}$ is obtained as follows:
	\begin{itemize}
\item	{	replace each  node of $H^{(i)}$,  with an independent set of at least two nodes ({if $d-i$  is even}) or a clique of size at least two {(if $d-i$  is odd)}.  }  
	\item { for each edge of $H^{(i)}$, put a complete bipartite graph between the  nodes of the graphs that replace the endpoints of  the  edge.}
	\end{itemize}
	\end{itemize}
The value  $nd(H^{(0)})$ is the number of nodes in $H^{(1)}$, 
{that is at least $k 2^{d-1}$},
while  $itp(H^{(0)})$ is  the size $k$ of   $H^{(d)}$.

We stress that  iterated type partition \ 
is  a   ``special case'' of modular-width in which the 
modules in operation 4) can only be independent sets or cliques.
Hence,  it is not difficult to see that for every graph $G$ 
\begin{equation}\label{mw-itp} 
mw(G)\leq itp(G).
\end{equation}
We know from \cite{L} that $nd(G)\leq 2^{vc(G)}+vc(G)$. Hence, by (\ref{nd-itp}), we have $itp(G)\leq 2^{vc(G)}+vc(G)$.
Moreover, using the same arguments of \cite{L} is it possible to show that $cw(G)\leq itp(G)+1.$
Finally, as for the neighborhood diversity we can easily show that the  iterated type partition\ is incomparable to the treewidth by comparing the values of such parameters on a complete  graph $K_n$  and a path on $n$ nodes.
A summary of the relations holding between  some popular  parameters 
is given in Fig. \ref{relazioni}. {   We refer to \cite{Fomin2009} for the formal definitions of treewidth and  clique-width 
		parameters.}
\begin{figure}[tb!]
\centering
		\includegraphics[width=5truecm]{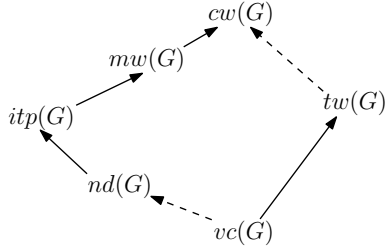} 
		\caption{A summary of the relations holding among  
		some popular  parameters. In addition to the previously  defined parameters,  
		we  use  $tw(G)$, $cw(G)$ and $vc(G)$ to denote  treewidth, clique-width and 
		minimum vertex cover of a graph $G$, respectively. Solid arrows denote 
		generalization, e.g., modular-width generalizes iterated type partition. 
		Dashed arrows denote that the generalization may exponentially increase the parameter. }
		\label{relazioni}
\end{figure}
\subsection{Our results and related work}
We give both tractability and hardness results for the new parameter.

\noindent
{\bf The \textit{Equitable Coloring} (EQC) problem.}
 If the nodes of a graph $G$  are colored with $k$ colors such that no adjacent nodes receive the same color (i.e., properly colored) and the sizes of any two color classes differ by at most one, then $G$  is said to be
{\em equitably $k$-colorable} and the coloring is said an 
{\em equitable $k$-coloring}.
The goal is to minimize the number of used colors.
The EQC problem is a well-studied problem, which has been analyzed in terms of parameterized  positive or negative results with respect to many different parameters \cite{Gomes}.

In particular, Fellows et al. \cite{Fel11} have shown that EQC problem  parameterized by treewidth and number of colors is $W[1]$-hard.
A series of reductions proving that \Eq \  is $W[1]$-hard for
different subclasses of chordal graphs are given in \cite{GLS}:
The problem is shown to be W[1]-hard  if parameterized by the
number of colors for block graphs and for the disjoint union of split graphs;  moreover,  
it remains W[1]-hard for $K_{1,4}$-free interval graphs even when parameterized by treewidth, 
number of colors and maximum degree.
In \cite{BF} an XP algorithm parameterized by treewidth is given.
We notice that an XP algorithm for \Eq \ parametrized by iterated type partition can be obtained by using Theorem 17 in \cite{Kn}.
On the other side, Fiala \textsl{et al.} show that 
the Equitable Coloring   problem is FPT when parameterized  
by the vertex cover number \cite{FGK}.
However, it was an open problem to establish the  parameterized  complexity of the Equitable Coloring problem parameterized  by   neigborhood diversity or
 modular-width. 
In section \ref{sec-eqc} we answer to these questions by proving the following results.
\begin{theorem}\label{teo-eqc}
The Equitable Coloring problem is $W[1]$-hard  parametrized by {\em itp}. 
\end{theorem}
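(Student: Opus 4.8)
The plan is to give a parameterized reduction from \Bin{} parameterized by the number of bins, a problem that is $W[1]$-hard and stays so when the item sizes are given in unary (which keeps the output graph of polynomial size). Given a bin-packing instance with $k$ bins of capacity $B$ and items $a_1,\dots,a_n$, where one may assume $\sum_{j}a_j=kB$ and $1\le a_j\le B$, I would output the graph
\[
G=K_k\ \sqcup\ \bigsqcup_{j=1}^{n}\bigl(K_{k-1}\vee\overline{K}_{a_j+1}\bigr),
\]
namely a clique on $k$ vertices together with, for each item $j$, a ``block'' $D_j$ formed by completely joining a clique $U_j$ on $k-1$ vertices to an independent set $I_j$ on $a_j+1$ vertices, and I would ask whether $G$ is equitably $k$-colorable. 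Once one checks that $itp(G)$ is bounded by a function of $k$, this is a parameterized reduction establishing the theorem; and since $mw(G)\le itp(G)$ by~\eqref{mw-itp}, the same reduction yields $W[1]$-hardness for modular-width too, which is the consequence highlighted in the introduction.

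The correctness argument rests on one structural observation. In a proper $k$-coloring of $G$, $K_k$ uses all $k$ colors; each $U_j$, being a clique on $k-1$ vertices, uses exactly $k-1$ of them and so misses a unique color $c_j$; and since every vertex of $I_j$ sees all of $U_j$, the whole set $I_j$ must be colored $c_j$. Thus the proper $k$-colorings of $G$ are exactly the assignments $j\mapsto c_j$ of items to colors. A short count gives the size of color class $c$ as $1+s_c+n$, where $s_c=\sum_{j:c_j=c}a_j$: the vertex of $K_k$ colored $c$ contributes $1$; each block $D_j$ with $c_j=c$ contributes $a_j+1$ (all of $I_j$), each block with $c_j\ne c$ contributes $1$ (a single vertex of $U_j$), and the ``$+1$''s inside the $I_j$'s cancel precisely the $U_j$-contributions. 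Hence the coloring is equitable if and only if the numbers $s_c$ differ pairwise by at most one; since $\sum_c s_c=kB$ is a multiple of $k$, this forces $s_c=B$ for every $c$, that is, the sets $\{j:c_j=c\}$ constitute a feasible packing. So $G$ is equitably $k$-colorable exactly when the instance is a yes-instance.

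It remains to bound $itp(G)$, and here the rigidity of $G$ is decisive. On the one hand $nd(G)=2n+1$ is unbounded -- the classes $U_j$ and $I_j$ of the $n$ blocks are pairwise distinct types -- and it \emph{has} to be unbounded, since otherwise the FPT algorithm for neighborhood diversity established elsewhere in this paper would solve these instances in FPT time. On the other hand, in the type graph sequence every block collapses quickly: the pair $U_j,I_j$ becomes a single edge in $\H1$, so $\H1$ is a disjoint union of $n$ copies of $K_2$ plus one isolated vertex; all of these have empty neighborhood, so they coalesce into a single independent set in $H^{(2)}$, and one further step brings it down to a single vertex. Thus the base graph of $G$ is a single vertex, $itp(G)=1$, which is trivially bounded in terms of $k$; the discrepancy with $nd(G)=2n+1$ is precisely the gap between the two parameters discussed earlier. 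The one genuinely delicate point -- and what distinguishes this from earlier treewidth- or modular-width-based reductions for \Eq{} -- is exactly this design constraint: the gadgets must simultaneously (i) make the equitable condition force each item into a single bin and reproduce the capacity bound, and (ii) consist only of cliques and independent sets joined completely or not at all, so that the iterated collapse above goes through; the ``$+1$'' padding of the independent sets and the use of $(k-1)$-cliques are what reconcile these two requirements.
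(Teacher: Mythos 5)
Your bin-packing equivalence is argued correctly as far as it goes: in any proper $k$-coloring each clique $U_j$ misses exactly one color $c_j$, every vertex of $I_j$ is then forced onto $c_j$, and the class-size count $|C_c|=1+n+s_c$ is right, so equitability is exactly the condition $s_c=B$ for every color. The trouble is what your own parameter analysis then implies. You compute $itp(G)=1$ (your description of the collapse is slightly garbled --- the $n$ disjoint edges of $H^{(1)}$ first contract to $n{+}1$ isolated vertices, and only those coalesce at the next step --- but the endpoint is indeed a one-vertex base graph; in fact your $G$ is a cograph, and every cograph has $itp=1$). A polynomial-time reduction from exact unary Bin-Packing whose image lies entirely in the slice $itp=1$ does not merely give $W[1]$-hardness: since exact unary Bin-Packing with an unbounded number of bins is strongly NP-hard (it contains 3-Partition), your reduction would show that Equitable Coloring is NP-hard already on graphs with $itp=1$, i.e.\ para-NP-hardness, and in particular NP-hardness on cographs and for modular-width $0$. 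That conclusion is incompatible with the XP algorithm for Equitable Coloring parameterized by $itp$ that the paper invokes via Knop's Theorem 17, and it would trivialize the open question about modular-width that Theorem~\ref{cor-eqc} answers. So either you are implicitly claiming to refute that XP algorithm --- a claim you would have to state and defend head-on --- or there is an error hiding in the argument; in either case the proof cannot stand as written.

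This is precisely the pitfall the paper's construction is engineered to avoid. The two $(k,\ell,B)$-chains, completely joined to every flower, form a rigid spine (a long induced path of metavertices) that survives the iterated type-partition process, so that $itp(G)=2k+3$ and the number of colors $k+3$ both grow linearly in the source parameter $k$. The hardness is thereby established in the regime where the number of colors is bounded by $itp$ --- the regime in which the problem is in XP and $W[1]$-hardness is the meaningful, tight statement. Your construction instead lets the number of colors run away from the parameter (colors $=k$ while $itp=1$), which is exactly what a reduction for this parameterization must not do. Concretely: keep your flower-free item gadgets if you like, but you must add an anchoring gadget, joined to everything, whose type-graph sequence does not collapse and whose size ties $itp(G)$ to $k$; without it the parameter bound you prove is the wrong one.
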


Recalling (\ref{mw-itp}), Theorem \ref{teo-eqc} immediately gives that the Equitable Coloring Problem is $W[1]$-hard w.r.t. modular-width.
\begin{theorem}\label{cor-eqc}
The EQC problem is $W[1]$-hard  {parametrized by modular-width.}
\end{theorem}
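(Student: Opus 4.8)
The plan is to obtain this as an immediate corollary of Theorem~\ref{teo-eqc} together with the parameter inequality (\ref{mw-itp}). Recall that a proof of $W[1]$-hardness of a parameterized problem consists of an FPT reduction from a problem already known to be $W[1]$-hard; in particular, the reduction underlying Theorem~\ref{teo-eqc} maps an instance of some $W[1]$-hard source problem with parameter $k$ to a graph $G$ for which $itp(G)$ is bounded by a computable function $g(k)$, and the reduction runs in time $h(k)\cdot n^{O(1)}$ for some computable $h$.

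First I would observe that, by (\ref{mw-itp}), every graph $G$ output by that reduction satisfies $mw(G)\le itp(G)\le g(k)$. Hence the very same reduction, now read as a map into instances of EQC parameterized by modular-width, is still a valid FPT reduction: its running time is unchanged, and the output parameter $mw(G)$ is bounded by the computable function $g(k)$ of the source parameter. Since EQC parameterized by $mw$ therefore admits an FPT reduction from a $W[1]$-hard problem, it is itself $W[1]$-hard, which is exactly the statement of Theorem~\ref{cor-eqc}.

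The only subtlety worth flagging is the direction of the parameter comparison. Because $mw$ is the \emph{smaller} of the two quantities, hardness propagates from $itp$ to $mw$ (equivalently: an FPT algorithm parameterized by $mw$ would, by $mw(G)\le itp(G)$, yield an FPT algorithm parameterized by $itp$, contradicting Theorem~\ref{teo-eqc}). There is no real obstacle here — all the difficulty is concentrated in the construction and correctness proof of Theorem~\ref{teo-eqc}; once inequality (\ref{mw-itp}) is available, the present statement is a one-line consequence, and I would present it as such.
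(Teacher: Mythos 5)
Your proposal is correct and matches the paper exactly: the paper derives Theorem~\ref{cor-eqc} as an immediate consequence of Theorem~\ref{teo-eqc} together with the inequality $mw(G)\le itp(G)$ from (\ref{mw-itp}), which is precisely your argument. You correctly handle the one point that matters here, namely that hardness transfers from the larger parameter ($itp$) to the smaller one ($mw$), so nothing is missing.
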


We also show that \Eq\    $W[1]$-hardness drops  when parameterized by the neighborhood diversity.
\begin{theorem}\label{teo-eqc-nd}
The EQC problem is FPT when parameterized by  neighborhood diversity.
\end{theorem}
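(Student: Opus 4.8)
The plan is to give an FPT algorithm for Equitable Coloring parameterized by $t = nd(G)$, by reducing the problem to an integer linear program (ILP) with a number of variables bounded by a function of $t$ and then invoking Lenstra's theorem (ILP is FPT in the number of variables). Let $\V = \{V_1, \dots, V_t\}$ be the type partition of $G$, computable in polynomial time by \cite{L}. The crucial structural observation is that for each pair $V_i, V_j$ there are either all edges or no edges between them, and each $V_i$ is either a clique or an independent set; so the only freedom in a proper coloring is, for each color class, \emph{which} of the $t$ types it touches and \emph{how many} vertices of each touched type it uses. I would first argue that the number of colors needed is at most $n$ but, more usefully, that an equitable $k$-coloring with $k \ge t$ colors always exists once $k$ is large enough, so it suffices to test each candidate value $k$ from the (polynomially bounded) relevant range and, for each fixed $k$, decide feasibility by an ILP.

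**Next I would** set up the ILP for a fixed target number of colors $k$. Define the notion of a \emph{color pattern}: a subset $S \subseteq [t]$ such that $S$ contains at most one index $i$ with $V_i$ a clique-type that is not a singleton actually — more carefully, a pattern is a vector recording, for each clique-type, whether the color class uses $0$ or $1$ vertex of it, and for each independent-type $i \in S$, the class may use several vertices of $V_i$, but two types $V_i, V_j$ with an edge between them cannot both be ``used'' — so a valid pattern is an independent set in the type graph $H^{(1)}$ together with the restriction that clique-types contribute at most one vertex. The number of such patterns is at most $2^t$, hence bounded by a function of $t$. For each pattern $P$ introduce an integer variable $x_P \ge 0$ counting how many color classes use exactly pattern $P$. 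The constraints are: (i) $\sum_P x_P = k$ (we use exactly $k$ colors); (ii) for each type $V_i$, the total number of vertices of $V_i$ covered equals $|V_i|$ — this requires, for independent-types, a further refinement where the variable also records the multiplicity of $V_i$ inside the pattern, or alternatively one splits $x_P$ into $x_{P,\vec{m}}$ indexed also by the multiplicity vector; to keep the variable count bounded one observes multiplicities lie in $\{0,\dots,|V_i|\}$ and collapses them by a second family of variables $y_{i,c}$ = number of color classes that take exactly $c$ vertices from $V_i$, linked to the $x_P$'s; (iii) equitability: every color class has size $\lfloor n/k\rfloor$ or $\lceil n/k\rceil$, which is again expressible linearly once we track class sizes. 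Designing constraint set (ii)+(iii) so that the total number of variables stays $O(f(t))$ while still capturing sizes correctly is the technical heart.

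**The main obstacle** I anticipate is exactly reconciling two competing demands: equitability forces us to control the \emph{size} of each color class (an integer up to $n$), while keeping the ILP dimension independent of $n$ forces us to aggregate color classes into few pattern-types. The clean way around this is to note that for a fixed pattern $P$ the size of a color class using $P$ is determined by how many vertices of each independent-type in $P$ it absorbs, and since all class sizes must be one of two values $q:=\lfloor n/k\rfloor$ or $q+1$, for each pattern there are only polynomially many (in fact at most $t+1$, one per choice of which independent-type ``soaks up the slack'') relevant size-realizations — better yet, one can guess globally the number $r = n \bmod k$ of large classes and then every class size is pinned, so within pattern $P$ the multiset of independent-type multiplicities summing to a fixed target has boundedly many ``profiles'' that matter for the counting, or can be handled by additional bounded-range integer variables. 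After this the system has at most $2^{O(t)}$ variables and $O(2^{t} + t)$ constraints with coefficients bounded by $n$, so Lenstra's algorithm \cite{L} solves it in time $f(t)\cdot \mathrm{poly}(\log n)$; running it for each of the $O(n)$ candidate values of $k$ (or only the $O(t)$ values of $k$ for which $k \ge t$ and equitability is not trivially forced, plus a direct check for small $k$) yields the claimed FPT algorithm. I would close by remarking that the same ILP framework, restricted to patterns that are independent sets in the type graph, is what makes neighborhood diversity tractable here whereas iterated type partition is not — the obstruction in the hardness proof of Theorem~\ref{teo-eqc} being that after iterating, one loses the direct ILP-on-types encoding.
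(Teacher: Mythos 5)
Your overall strategy coincides with the paper's: reduce to an ILP whose variables are indexed by \emph{patterns}, i.e.\ independent sets of the type graph (with clique-types contributing at most one vertex per class), and solve it with an FPT algorithm for ILP in bounded dimension. You also correctly isolate the technical heart: encoding the class-size (equitability) constraints without letting the number of variables depend on $n$. The gap is that none of the concrete encodings you propose for this step actually keeps the dimension bounded by a function of $t$ alone. Variables $x_{P,\vec m}$ indexed by multiplicity vectors give up to $n^{\Theta(t)}$ variables; the family $y_{i,c}$ with $c\in\{0,\dots,|V_i|\}$ gives $\Theta(tn)$ variables; and the claim that, once the class size is pinned to $r=\lfloor n/k\rfloor$, each pattern admits only ``boundedly many profiles that matter'' is not justified --- the number of ways to split $r$ among the independent types of a pattern is not bounded in $t$. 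Since Lenstra-type algorithms are exponential in the number of variables, each of these routes breaks the FPT claim.

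The missing idea, which is what the paper's proof actually supplies, is an \emph{aggregation} that makes per-class bookkeeping unnecessary. After padding $G$ so that $k$ divides $|V|$ (hence every class has size exactly $r$), introduce, for each independent set $I$ of the type graph and each $i\in I$, a single variable $z_{I,i}$ equal to the \emph{total} number of vertices of $V_i$ receiving one of the $z_I$ colors reserved for pattern $I$. The constraints are then $\sum_I z_I=k$, $\sum_{I:\,i\in I} z_{I,i}=|V_i|$, $\sum_{i\in I} z_{I,i}=r\cdot z_I$, and $z_{I,i}\le z_I$ when $V_i$ is a clique; this is $O(t\,2^{t})$ variables. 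The nontrivial direction is that any feasible aggregate can be realized by an actual equitable coloring: one cycles through the $z_I$ colors in a fixed order while handing out $z_{I,i}$ of them to $V_i$ for each $i\in I$, which makes every color appear exactly $r$ times and, thanks to $z_{I,i}\le z_I$, gives distinct colors inside each clique-type. Without this aggregation-plus-realization argument your proof does not go through as written, even though the high-level plan is the right one.
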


\smallskip
\noindent
{\bf FPT algorithms w.r.t. itp.}
In the last section 
we  deal with   FPT algorithms with respect to iterated type partition.
Some of the  considered problems   are already known to be FPT w.r.t  modular-width. Nonetheless, we   think that  
 the new algorithms, parameterized by   iterated type partition,  are  worthy to be considered, since they are much simpler, faster,  and allow to easily determine
not only the value, but also the optimal solution. {As an example we consider  here the  dominating set (DS),  the vertex coloring (Coloring), and the vertex cover (VC)  problems.
}

Table \ref{table} summarizes the contribution of this paper, in relation to known results. 

\begin{table}[tb!]
\begin{center}
\begin{tabular}{|l|l|l|l|}
\hline
  & DS, VC & Coloring  & EQC\\ 
\hline
 $cw$  & FPT\cite{Courcelle2000}	& W[1]-hard \cite{Fomin2009}&  W[1]-hard \cite{FGLS}\\   
 $mw$  & FPT\cite{romanek} & FPT\cite{GLO}   & W[1]-hard [*]\\     
 $itp$  & FPT(\com{2^t})[*]  & FPT(\com{t^{2.5t+o(t)}\log n})[*] &  W[1]-hard [*]\\  
 $nd$  & FPT\cite{L} & FPT\cite{L}  & FPT[*]\\    
 $vc$  & FPT\cite{L} & FPT\cite{L}   & FPT \cite{FGK} \\     \hline
\end{tabular}
\caption{The table summarizes the results known in literature for several problems parametrized by  iterated type partition\ and related parameters. $t$ denotes the value of the considered parameter and [*] denotes the result obtained in this paper.
\label{table}}
\end{center}
\end{table}
\section{Equitable coloring (EQC)}\label{sec-eqc} 
\def \l {{{\ell}}}
\def \P {{{P}}}
\def \Fao {{{F_{a,k}}}}
\def \Fa {{{F_{a}}}}
\def \Fu {{{F_{a_1}}}}
\def \Fj {{{F_{a_j}}}}
\def \Fh {{{F_{a_h}}}}
\def \FB {{{F_{B}}}}
\def \Fel {{{F_{a_\l}}}}
\def \Qko {{{Q_{k,\l,B}}}}
\def \Qk {{{Q}}}
\def \Quk {{{Q'}}}
\def \Qdk {{{Q''}}}
\def\Ho{{H^{(0)}}}
\def\H1{{H^{(1)}}}
\def\Hii{{H^{(i)}}}
\def\Hi1{{H^{(i-1)}}}
\def\Hd{{H^{(d)}}}
\def \Hdue {{H^{(2)}}}
\def \Htre {{H^{(3)}}}
\def \Hq {{H^{(4)}}}
 {In this section we 
prove 
Theorems \ref{teo-eqc} and \ref{teo-eqc-nd}.}
\begin{quote}
	\textbf{Equitable Coloring}
\\
\textbf{Instance:} A graph $G=(V,E)$ and an integer  $k$. 
\\
\textbf{Question:} 
Is it possible  to color the nodes of $G$ with exactly $k$ colors in such a way  that nodes connected by an edge receive different colors and each color class  has     either size $\lfloor |V|/k \rfloor$ or $\lceil |V|/k \rceil$? 
\end{quote}
\subsection{Hardness}
In order to prove that Equitable Coloring problem is  $W[1]$-hard if parameterized 
by iterated type partition,  we present a reduction  from the following 
Bin packing problem, which has been shown to be W[1]-hard when 
parameterized by the number of bins \cite{Jansen}.  
\begin{itemize}
	\item[] \textbf{\Bin}
\\
\textbf{Instance:} A collection of items $A=\{a_1,a_2,\cdots,a_{\l}\}$, a number $k$ of bins, and a bin capacity $B$.
\\
\textbf{Question:} $\exists$  a $k$-partition $\P_1,\cdots,\P_k$ of $A$ such that
  $\sum_{a_j \in \P_i} a_j = B$, $\forall \, i \in[k]$?
\end{itemize}
In general the \Bin \ problem asks for the sum of the items of each bin to be {\em at most} $B$; however, the above version   is equivalent to the general one (even from the parameterized point of view) as it is sufficient to add $kB-\sum_{j =1}^{\l} a_j$ unitary items \cite{GLS}. 
\begin{figure}[tb!]
	\centering
\includegraphics[width=0.7\textwidth]{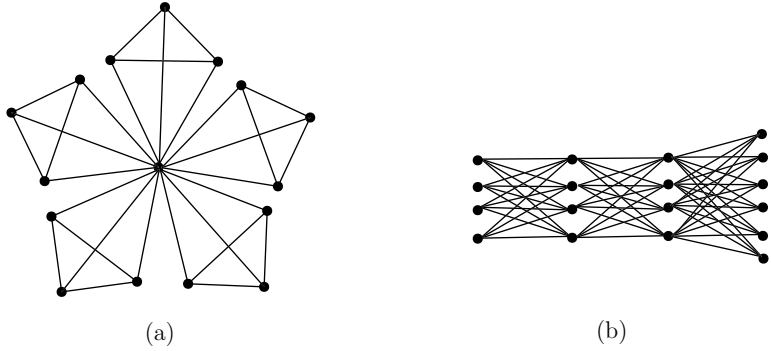} 
		\caption{(a) $(4,3)$--flower;  (b) $(3,5,4)$--chain. }
		\label{fig:flower+chain}
\end{figure}
In order to describe  our reduction,  we introduce two useful gadgets.
The first one is the flower gadget also used in \cite{GLS}. 
Let $a$ and $k$ be positive integers.
An  {\em $(a,k)$--flower} $\Fao$  is a graph obtained 
by joining $a +1$ cliques  of size $k$ to a central node $y$. 
Fig. \ref{fig:flower+chain}(a) shows the $(4,3)$--flower.
Formally, {let $K_k^i$ be a copy of a cliques  of size $k$, for each $i \in[a+1]$, }

-- \ $V(\Fao)=\{y\} \cup \bigcup_{i \in[a+1]}V(K_k^i)$, and 

-- \ $E(\Fao)= \{(y,x) \ | \ x \in \bigcup_{i \in[a+1]}V(K_k^i)\} 
\cup \bigcup_{i \in[a+1]}E(K_k^i).$\\
\noindent
The second gadget is defined starting from three positive integers: 
$k, \l$ and $B$. It is a sequence of sets of independent nodes 
 $S_1, \cdots, S_k, S_{k+1}$ with $|S_i|=B$, for $i \in [k]$, 
and $|S_{k+1}|=\l+1$ where between each pair of consecutive sets in 
the sequence $S_i$, $S_{i+1}$ there is a complete bipartite graph. 
We call such a gadget a $(k,\l,B)$--{\em chain} $\Qk$. 
Fig. \ref{fig:flower+chain}(b)  shows the $(3,5,4)$--chain.
Formally, 

-- \ $V(\Qk)= \bigcup_{i \in[k+1]}S_i$, and

-- \ $E(\Qk)= \bigcup_{i \in[k]}\{(u,v) \ | \ u \in S_i, v \in S_{i+1}\}.$

We can now  describe our reduction. 
Let $\langle A=\{a_1,\cdots,a_\l\}, k, B \rangle$ be an instance of \Bin.
Define a graph $G$ as follows: 
{Consider the disjoint union of two $(k,\l,B)$-chains, $\Quk$ and $\Qdk$, and the flowers $F_{a_1,k}, \cdots, F_{a_\l,k}, F_{B,k}$, then join each node in the flowers to each node in the chains. In the following, whenever the number of bin $k$ is clear by the context, we use $\Fa$ instead of $\Fao$.}
Formally,

-- $V(G)= V(\Quk) \cup V(\Qdk) \cup V(\FB) \cup \left(\bigcup_{j \in [\l]} V(\Fj)\right)$, and 

-- $E(G) = E(\Quk) \cup E(\Qdk)  \cup E(\FB) \cup \left(\bigcup_{j \in [\l]} E(\Fj) \right)\cup 
\\
\hphantom{E(G) oooooooooooooooooo==} \cup \left\{(x,u) \big| \ x \in V(\FB) \cup \left(\bigcup_{j \in [\l]} V(\Fj)\right), \ 
u \in V(\Quk) \cup V(\Qdk)  \right\}.$

\noindent
Fig. \ref{fig:type-graph-sequence}  shows the graph $G$ when 
{$A=\{2,1,2,3\}$}, $B=4$ and $k=3$. {Call $S'_i$ (resp. $S''_i$) is the $i$-th set of 
independent nodes in $\Quk$ (resp. $\Qdk$).}
The number of nodes in the resulting graph $G$ is
\begin{eqnarray} \label{size_G} \nonumber
|V(G)| & = & |V(\Quk)|+ |V(\Qdk)| + |V(\FB)| + \sum_{j \in [\l]} |V(\Fj)|  \\
\nonumber
 & = & \sum_{i \in[k+1]} |S'_i| +  \sum_{i \in[k+1]} |S''_i| +
(1+ (B+1)k) + \sum_{j \in [\l]} (1+(a_j+1)k) \\
\nonumber
 & = & 2(Bk +\l+1) + (1+ (B+1)k) + \l + k  \sum_{j \in [\l]} (a_j+1) \\
\nonumber
 & = & 2(Bk +\l+1) + (1+ (B+1)k) + \l + k \l +k^2B\\
 & = & (k+3)(Bk+\l +1).
\end{eqnarray}

\begin{figure}[tb!]
	\centering
			\includegraphics[width=14truecm]{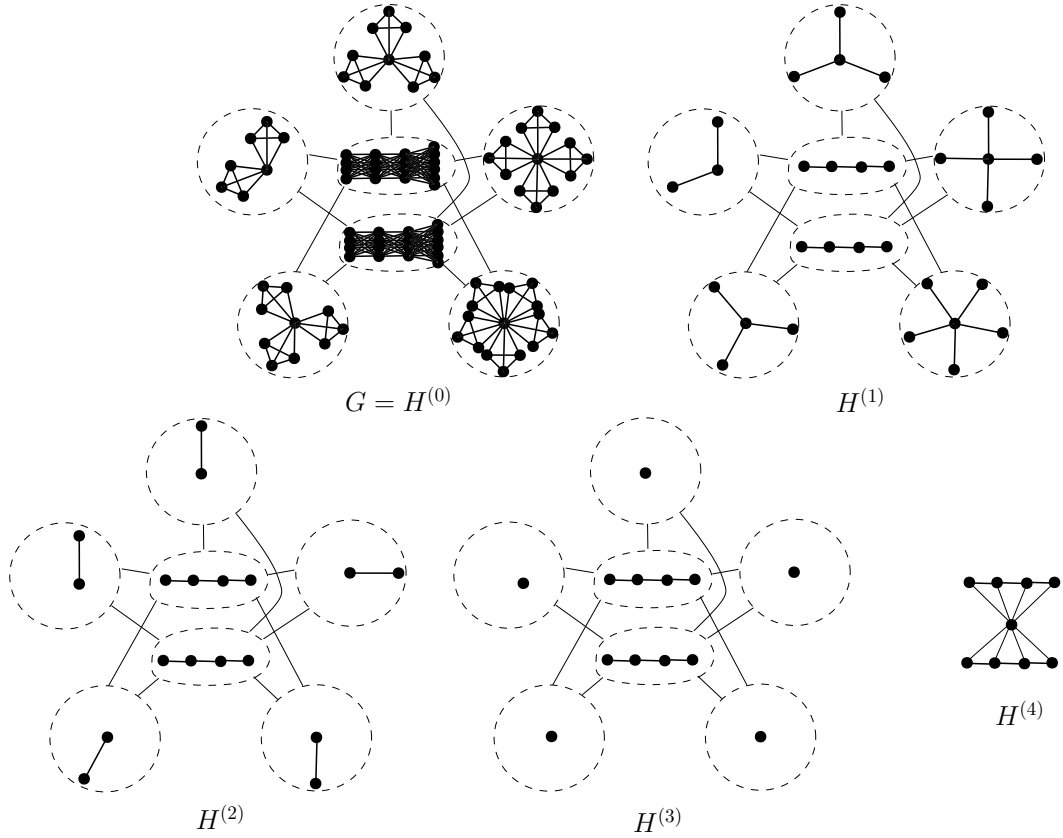} 
		\caption{The type graph sequence 
of $G$ when {$A=\{2,1,2,3\}$}, $B=4$, and $k=3$. The line connecting dashed circles indicates a complete bipartite graph between the nodes in the circles. }
		\label{fig:type-graph-sequence}
\end{figure}

\begin{lemma}\label{lemma-iff}
$\langle A=\{a_1,\cdots,a_\l\}, k, B \rangle$ is a YES instance of \Bin \ 
if and only if $G$ is equitably $(k+3)$--colorable. 
\end{lemma}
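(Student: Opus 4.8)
The plan is to establish the two implications separately, in both cases exploiting the rigid structure forced on any equitable $(k+3)$-coloring by the chains and flowers. First observe from \eqref{size_G} that $|V(G)| = (k+3)(Bk+\l+1)$, so each of the $k+3$ color classes must have \emph{exactly} $Bk+\l+1$ nodes; there is no slack. I would name the colors $1,\dots,k+3$ and track which colors appear in which gadget.

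For the forward direction, suppose $\langle A,k,B\rangle$ is a YES instance with $k$-partition $\P_1,\dots,\P_k$. I would construct the coloring explicitly. The key idea: in each chain $\Quk$, the consecutive complete-bipartite structure forces the sets $S'_1,\dots,S'_{k+1}$ to be colored so that each $S'_i$ is monochromatic-free but, more usefully, one reserves $k+1$ "slots" — assign to the sets $S'_1,\dots,S'_k$ a palette of $k$ colors (one dominant color each, cycling) and let $S'_{k+1}$ (of size $\l+1$) pick up the remaining colors; do the symmetric thing for $\Qdk$. The flowers $F_{a_j,k}$ attach to a central node $y_j$ which is joined to everything in both chains, so $y_j$ and the $a_j+1$ cliques of size $k$ must all avoid the chain colors — this is where the "$+3$" colors beyond the $k$ bin-colors come in. The petals of $F_{a_j}$ get colored using the $k$ bin-colors $\{1,\dots,k\}$, and the amount of "color $i$" contributed by flower $F_{a_j}$ is governed by whether $a_j \in \P_i$. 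Summing the requirement that every class has size exactly $Bk+\l+1$ across all flowers and both chains yields precisely the bin-packing equations $\sum_{a_j\in\P_i}a_j = B$. I would present the assignment as an explicit table (analogous to Fig.~\ref{fig:type-graph-sequence}) and then verify properness and the class-size count by direct computation.

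For the reverse direction, suppose $G$ is equitably $(k+3)$-colored. Here the work is to \emph{recover} a valid bin packing from the coloring. Since every node of every flower is adjacent to every node of both chains, the set of colors used on $V(\Quk)\cup V(\Qdk)$ and the set used on the flowers are disjoint — but both chains together have enough vertices that they must use at least... I would argue that the chains use exactly $k+1$ colors and the flowers use exactly the complementary set, whose size must then be... wait, $k+3$ total, so I need a finer count: the central nodes $y_j$ together with $y_B$ form an independent set all adjacent to both chains, and each flower's $a_j+1$ cliques of size $k$ force $k$ colors inside that flower. The crux is showing the chains consume exactly $3$ colors' worth of "excess" so that exactly $k$ colors remain for the petals, and that in the chain, color-class sizes split as $B$ per internal set. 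Once the petal colors are pinned to $\{1,\dots,k\}$, define $\P_i = \{a_j : \text{flower } F_{a_j}\text{ uses color } i \text{ on its petals}\}$ (with a tie-breaking argument showing each $a_j$ lands in exactly one $\P_i$, because flower $F_{a_j}$ must "spend" color $i$ on all of one clique-layer); the exact class-size constraint then forces $\sum_{a_j\in\P_i}a_j = B$.

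The main obstacle is the reverse direction's bookkeeping: proving that an \emph{arbitrary} equitable coloring is forced into the canonical shape. Concretely, the hard step is showing (i) the chains and flowers use disjoint color sets of sizes $k+1$ and — no — of the right sizes, and (ii) within a flower, the colors available for the $k$-cliques are used in the "balanced" way that makes the map $a_j \mapsto \P_i$ well-defined, i.e., that each flower $F_{a_j}$ contributes its $a_j$ "extra" petal-nodes entirely to a single color class. I expect this to require a careful counting argument: comparing the total size $Bk+\l+1$ of a class against the contributions available from the two chains (which can give at most some fixed amount per color) and from each flower, and squeezing out that the flower contributions must be integral multiples matching the $a_j$'s. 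The chain $S_{k+1}$ of size $\l+1$ and the extra flower $F_{B,k}$ are the "padding" pieces whose sizes are chosen exactly to make these equalities tight, and the proof must verify that tightness leaves no alternative colorings.
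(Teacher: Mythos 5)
Your overall skeleton (explicit construction forward; structural forcing plus exact counting backward) matches the paper's, but several of your concrete steps are wrong and would break the proof.

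\textbf{Forward direction.} Your proposed chain coloring cannot work. Every node of every flower is joined to \emph{every} node of both chains, and each flower $F_{a_j,k}$ contains a $K_k$ plus a universal center, so it needs at least $k+1$ colors; since only $k+3$ colors exist, the two chains together may use at most $2$ colors, and these two colors can never reappear in any flower. Your plan assigns a palette of $k$ colors across $S'_1,\dots,S'_k$, which would leave only $3$ colors for the flowers --- far too few. The paper instead colors both chains with just the two colors $k+2$ and $k+3$, alternating by parity of the index and with opposite parities in the two chains, so that each of these two classes collects exactly one of $\{S'_i,S''_i\}$ for every $i$, giving size $kB+\ell+1$. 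The remaining $k+1$ colors go to the flowers: the center $z$ of $F_{B}$ gets color $k+1$, each of its $B+1$ cliques gets $\{1,\dots,k\}$, the center $y_j$ gets color $i$ when $a_j\in P_i$, and each of its $a_j+1$ cliques gets the $k$ colors of $[k+1]\setminus\{i\}$.

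\textbf{Reverse direction.} Two genuine gaps here. First, your color-set split is wrong: the chains use exactly $2$ colors (not $k+1$, and not ``$3$ colors' worth of excess''), disjoint from the $k+1$ colors used on the flowers; this is the paper's Claim 1 and follows from the adjacency argument above. Second, your definition $P_i=\{a_j : F_{a_j}\text{ uses color }i\text{ on its petals}\}$ is not well defined: once the flower colors are pinned to $[k+1]$, \emph{every} clique of $F_{a_j}$ must use exactly the $k$ colors $[k+1]\setminus\{c(y_j)\}$, so each flower uses $k$ of the colors on its petals and your sets would overlap massively. The correct assignment is the complementary one, $P_i=\{a_j : c(y_j)=i\}$, i.e., $a_j$ goes to the bin indexed by the unique color \emph{missing} from its petals. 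Finally, you omit an essential intermediate step: one must show $c(z)\neq c(y_j)$ for all $j$ (the paper's Claim 2, proved by showing that otherwise some class $C_i$ would have size at most $\ell+kB<\ell+kB+1$). This guarantees $c(z)=k+1$ up to renaming, hence $c(y_j)\in[k]$ for all $j$ and each class $C_i$ with $i\in[k]$ receives exactly $B+1$ nodes from $F_B$; only then does the count $|C_i|=kB+\ell+(B+1)-\sum_{j:c(y_j)=i}a_j=kB+\ell+1$ force $\sum_{j:c(y_j)=i}a_j=B$.
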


\begin{proof}{}
Given a $k$-partition $\P_1,\cdots,\P_k$ of $A$ that solves our instance 
of \Bin, i.e.,  $\sum_{a_j \in \P_i} a_j = B$ for each $i \in [k]$, 
we  construct a coloring $c$ of the nodes of $G$ and prove 
that it is an equitable (k+3)-coloring of $G$. 
\begin{itemize}
\item{}{ Coloring of the nodes in $\Quk$:} 
For each $i \in [k+1]$ and  $u \in S'_i$ , assign
\begin{equation} \label{S11}
c(u)= \begin{cases} 
         k+3 & \mbox{if $i$ is odd,}\\
				 k+2 & \mbox{if $i$ is even.}
       \end{cases}
\end{equation}
\item{}{Coloring of the nodes in $\Qdk$: }
For each $i \in [k+1]$ and $u \in S''_i$,   assign
\begin{equation} \label{S22}
c(u)= \begin{cases} 
         k+3 & \mbox{if $i$ is even,}\\
				 k+2 & \mbox{if $i$ is odd.}
       \end{cases}
\end{equation}
\item{}{ Coloring of the nodes in $\FB$: }
Let $z$ be the central node in 
$\FB$. Assign $c(z)=k+1$. Then, assign to each of the $k$ nodes of 
the $B+1$ cliques joined to $z$ the remaining $k$ colors, i.e., 
the colors in $\{1,2, \cdots k\}$, so that each node of the clique has a different color. 
\item{ Coloring of the nodes in $\Fj$, for $j\in[\l]$:} 
Let $y_j$ be the central node in 
$\Fj$. Assign $c(y_j)=i$ if $a_j \in P_i$. Then, as before assign to each of 
the $k$ nodes of the $a_j+1$ cliques joined to $y_j$ the remaining $k$ 
colors, i.e., the colors in $\{1,2, \cdots k,k+1\}-\{i\}$, so that each node of the clique has a different color. 
\end{itemize}
It is immediate to see that the above coloring $c$ is   proper.
Now we prove that is also equitable. Since  $|V(G)| = (Bk +\l+1) (k+3)$ 
(recall (\ref{size_G})), we have only to prove that each class of 
colors contains $Bk +\l+1$ nodes.
Denote by $C_i$ the class of color $i$, with $i\in [k+3]$.
\\
$\bullet$ Colors $k+3$ and $k+2$ are used only to color the nodes of the two $k$-chains $\Quk$ and $\Qdk$ and by (\ref{S11}) and (\ref{S22}) we have\\
$$|C_{k+3}|= \begin{cases} 
    |S'_1|+|S''_2|+\cdots + |S'_k|+|S''_{k+1}|= kB+\l+1 & \mbox{ if $k$ is odd}\\
    |S'_1|+|S''_2|+\cdots + |S''_k|+|S'_{k+1}|= kB+\l+1 & \mbox{ if $k$ is even.}
	\end{cases}$$	
$$|C_{k+2}|=  \begin{cases}
|S''_1|+|S'_2|+\cdots + |S''_k|+|S'_{k+1}|= kB+\l+1 & \mbox{ if $k$ is odd} \\
 |S''_1|+|S'_2|+\cdots + |S'_k|+|S''_{k+1}|= kB+\l+1 & \mbox{  if $k$ is even.}
	\end{cases}$$	
$\bullet$ Color $k+1$ is used to color node $z$ and exactly one node in each of the $a_j+1$ cliques in the flower $\Fj$ for $j \in [\l]$; hence,
$$|C_{k+1}|= 1 + \sum_{j \in [\l]} (a_j+1) = 1 + \l + kB.$$
$\bullet$ Let $i \in [k]$. Colors $i$  is used to color the following nodes:
The central node $y_j$ of the flower $\Fj$ where $a_j \in P_i$ 
(no other node in  $\Fj$  is colored with $i$), exactly one node in each
 of the $a_h+1$  cliques in the flower $\Fh$ for $a_h \not\in P_i$,
and exactly one node in each of the $B+1$ cliques in the flower $\FB$. Hence,

\begin{eqnarray*}
|C_{i}| &=& \sum_{a_j \in P_i} 1 + \sum_{a_h \not\in P_i} (a_h+1) + (B+1)\\
&=& \sum_{a_j \in A} 1  + \sum_{a_h \not\in P_i} a_h + (B+1)\\
&=& \l + (k-1)B +(B+1)= 1 + \l + kB.
\end{eqnarray*}

\smallskip

Now, let $c$ be an equitable $(k+3)$-coloring of $G$.
First we claim some features of the coloring $c$ and then we use them to 
build a $k$-partition of $A$.
\begin{claim} \label{claim1}
The coloring $c$ assigns two colors to the nodes in $\Quk$ and $\Qdk$, and these colors are not assigned to any node of the flowers $\FB$ and 
$\Fj$ for $j\in [\l]$.
\end{claim}
\proof 
Since the central node $y_j$ (resp. $z$) of the flower $\Fj$ (resp. $\FB$) 
is connected to each node in any clique $K_k$ in $\Fj$ (resp. $\FB$), we have that the nodes of $\Fj$ (resp. $\FB$) need at least $k+1$ colors. 
Furthermore, the nodes of $\Fj$ and of $\FB$ are connected to all the nodes 
of the $k$-chains $\Quk$ and $\Qdk$. Hence, the colors that $c$ assigns to the nodes in $\Quk$ and $\Qdk$ have to be at most two (recall that $c$ is a 
$(k+3)$-coloring). 
On the other hand, between each pair of consecutive sets $S'_i$, $S'_{i+1}$ in $\Quk$  (resp. $S''_i$, $S''_{i+1}$ in $\Qdk$) there is a complete bipartite graph; hence, the coloring $c$ has to assign to the nodes in $\Quk$ and $\Qdk$ at least two colors.
\qed 

By Claim \ref{claim1}, w.l.o.g. we assume that $c$ assigns colors $k+2$ and $k+3$ to the nodes in $\Quk$ and $\Qdk$, and colors in $[k+1]$ to the nodes of the flowers $\FB$ and $\Fj$ for $j \in [\l]$.
In the following we denote by $C_i$ be the class of nodes whose color is $i$,
where $i \in [k+1]$.

\begin{claim} \label{claim2}
$c(z) \neq c(y_j)$ for each $j \in [\l]$.
\end{claim}
\proof 
Let $c(z) = i$ with $i \in [k+1]$.
By contradiction assume that there exists   a node $y_j$ 
for some $j \in [\l]$, such that $c(z) = c(y_j) = i$, that is $y_j \in C_i$.
Since $z$ is connected to each other node in the flower 
$\FB$ and $y_j$ is connected to each other node in the flower $\Fj$, we have that color $i$ is not used by any other node in  $\FB$ and $\Fj$. 
On the other hand, color $i$ is used by
exactly one node in each of the $a_h+1$ cliques in flower $\Fh$ where 
$y_h \not \in C_ i$ 
(recall that $c$ assigns colors in $[k+1]$ to the nodes of $\Fh$).
Hence,
\begin{eqnarray*}
|C_i| & = & 2 + \sum_{h: y_h \not \in C_i}(a_h+1) 
= 2 + \sum_{ h: y_h \not \in C_i}(a_h+1) + \sum_{j: y_j \in C_i}(a_j+1) -  \sum_{j: y_j \in C_i}(a_j+1)\\
& = & 2 + \sum_{h \in [\l]}(a_h+1) -  \sum_{j: y_j \in C_i}(a_j+1)\\
&= &2+ \l + kB - \sum_{j: y_j \in C_i}(a_j+1) \\
&\leq  & \l + kB  \qquad  \quad \mbox{(by the hypothesis there exists $y_j \in C_i$  and  $\sum_{j: y_j \in C_i}(a_j+1)\geq 2$)}.
\end{eqnarray*}
The above inequality is not possible since $c$ is an equitable $(k+3)$--coloring 
and by (\ref{size_G}) we have $|C_i|=1+ \l + kB$.
\qed

By Claim \ref{claim2}, w.l.o.g. we assume that $c(z)=k+1$ and 
then $c(y_j) \in [k]$ for each $j \in [\l]$.
In the following we will prove that the partition 
$P_i=\{a_j \ | \ c(y_j)=i \}$ with $i \in [k]$ is a $k$-partition of $A$.
In particular, we will prove that 
\begin{equation}\label{partition}
\sum_{j: y_j \in C_i} a_j =B.
\end{equation}
Consider a flower $\Fj$, with $j \in [\l]$:
 If the color $i$ is assigned to the center $y_j$  then it is not assigned to any other vertex in $\Fj$;
 if, otherwise, the color $i$ is not assigned to the center $y_j$ then it is  assigned to exactly one vertex in each of the $a_j+1$ cliques $K_k$ connected to $y_j$.\\
Furthermore, the center $z$ of the flower $\FB$  has color $k+1$; hence,  color $i$  is  assigned to exactly one vertex in each of the $B+1$ cliques $K_k$ connected to $z$.
Summarizing,
\begin{eqnarray} \nonumber
|C_i| & = & \sum_{j: y_j \in C_i} 1 + (B+1) + 
 \sum_{h: y_h \not \in C_i} (a_h+1)\\  \nonumber
& = & \sum_{j: y_j  \in C_i} (1 +a_j-a_j) +  (B+1) +
 \sum_{h: y_h \not \in C_i} (a_h+1)\\  \nonumber
& = &  \sum_{j \in [\l]} (1 +a_j) +  (B+1) - 
 \sum_{j: y_j  \in C_i} a_j\\ \label{Ci}
& = &  kB + \l  + (B+1) - 
 \sum_{j: y_j  \in C_i} a_j
\end{eqnarray}
Since the coloring $c$ is an equitable (k+3)-coloring and by (\ref{size_G}) 
it holds $|V(G)| = (Bk +\l+1) (k+3)$, we have that  
$|C_i|=Bk +\l+1$. By using this fact and (\ref{Ci}) we have (\ref{partition}).
\end{proof}

\begin{lemma}\label{lemma-itp}
The iterated type partition $\dnd(G)$ of $G$ is $2k+3$. 
\end{lemma}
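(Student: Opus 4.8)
The plan is to write down the full type graph sequence $\Ho=G,\H1,\Hdue,\Htre,\Hq$ of $G$ explicitly, check that $\Hq$ is the first base graph in it, and read off $|V(\Hq)|=2k+3$. We may assume $k\ge 3$: Bin-Packing parameterized by the number of bins stays W[1]-hard on instances with $k\ge3$, because $\langle A,k,B\rangle$ is a YES-instance iff $\langle A\cup\{B\},k+1,B\rangle$ is. We also assume $B\ge1$ and, for clarity, $a_j\ge1$ for all $j$ (flowers $\Fj$ with $a_j=0$ are copies of $K_{k+1}$ and only shorten the sequence, giving the same final value).

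First I would determine $\H1$, the type partition of $G$. Each $S'_i$ (and each $S''_i$) is an independent set whose vertices share the neighbourhood $S'_{i-1}\cup S'_{i+1}$ together with every flower vertex, so $S'_1,\dots,S'_{k+1},S''_1,\dots,S''_{k+1}$ are $2(k+1)$ pairwise distinct types. Inside a flower $\Fa$ with centre $y$, the centre is one type and each copy $K_k^i$ of $K_k$ is one type; distinctness holds because two vertices in different cliques of the same flower, or in different flowers, are non-adjacent with disjoint clique-mates, and a centre has strictly larger degree than its clique vertices when $a\ge1$. So $\H1$ has metavertices $s'_i,s''_i$ ($i\in[k+1]$), a centre $z$ and cliques $q_1,\dots,q_{B+1}$ from $\FB$, and a centre $y_j$ and cliques $p_{j,1},\dots,p_{j,a_j+1}$ from $\Fj$; its edges are the two paths $s'_1 s'_2 \cdots s'_{k+1}$ and $s''_1 s''_2 \cdots s''_{k+1}$, a complete join between all chain metavertices and all flower metavertices, the stars $z\sim q_m$ and $y_j\sim p_{j,m}$, and nothing else.

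Then I would iterate three more times, at each step only noting which metavertices become twins. In $\H1$ the neighbourhood of $p_{j,m}$ is $\{y_j\}\cup\{s'_i,s''_i:i\in[k+1]\}$, independent of $m$, so the $p_{j,m}$ collapse to one metavertex $\bar p_j$; similarly all $q_m$ collapse to $\bar q$; and these are the only twins (clique metavertices of different flowers differ in their unique adjacent centre; a centre and a clique metavertex differ in degree; chain metavertices have far larger neighbourhoods; and for $k\ge3$ the paths stay twin-free). Thus $\Hdue$ has $2(k+1)+2(\l+1)$ metavertices: the two paths, plus $z,\bar q$ and each $y_j,\bar p_j$. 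In $\Hdue$, within each flower the centre and the collapsed clique metavertex are adjacent and have the identical outside neighbourhood, namely all $2(k+1)$ chain metavertices; hence $z$ is a twin of $\bar q$ and, for each $j$, $y_j$ is a twin of $\bar p_j$, while no other pair merges; this gives $\Htre$ with $2(k+1)+(\l+1)$ metavertices, one metavertex $u_0,u_1,\dots,u_\l$ per flower. In $\Htre$ each $u_j$ has neighbourhood exactly the $2(k+1)$ chain metavertices and the $u_j$ are pairwise non-adjacent, hence all mutual twins, collapsing to a single metavertex $W$, while for $k\ge3$ the chain metavertices still do not merge (a path $P_{k+1}$ plus a vertex adjacent to all of it is twin-free). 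Therefore $\Hq$ is the disjoint union of the two paths $s'_1 \cdots s'_{k+1}$ and $s''_1 \cdots s''_{k+1}$ together with the vertex $W$ joined to all of them, and $|V(\Hq)|=2(k+1)+1=2k+3$.

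Finally I would check $\Hq$ is a base graph: $W$ is the unique vertex of degree $2(k+1)$; two vertices on the same path are not twins since $P_{k+1}$ is twin-free for $k+1\ge4$; a vertex of one path and a vertex of the other are non-adjacent with neighbours on different paths. As each of $\Ho,\H1,\Hdue,\Htre$ admits a nontrivial collapse, $\Hq$ is the first base graph, so $d=4$ and $\dnd(G)=2k+3$. The point that will take the most care is the repeated ``these are the only twins'' verification at the first two iterations, which must be checked for several distinct kinds of metavertex pairs and is exactly where the assumptions $k\ge3$ and $a_j\ge1$ are used.
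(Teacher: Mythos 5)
Your proof is correct and follows essentially the same route as the paper: both explicitly compute the type graph sequence $G=H^{(0)},H^{(1)},H^{(2)},H^{(3)},H^{(4)}$ and read off $|V(H^{(4)})|=2k+3$. Your extra care is actually warranted rather than pedantic: the paper silently assumes that the compressions it lists are the \emph{only} type-class merges at each stage, and for $k\leq 2$ this fails (e.g.\ for $k=2$ the sets $S'_1$ and $S'_3$ already have the same type in $G$, after which the whole graph eventually collapses to far fewer than $2k+3$ vertices), so your reduction to the case $k\geq 3$ repairs a genuine gap in the lemma as stated.
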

\begin{proof}{}
\remove{
We will show the type graph sequence $\Ho=G, \H1, \Hdue, \Htre, \Hq$ 
of $G$, and we will verify that the number of nodes of the base 
graph $\Hq$ is $2k+3$.
Fig. \ref{fig:type-graph-sequence}  shows the type graph sequence 
of graph $G$ when $A=\{1,2,3,4,2\}$, $B=4$ and $k=3$.
}

\noindent
The type graph $\H1$ of $G$ is obtained as follows:\\
-- Compress the  cliques   in the flower $\FB$ into one node each. 
Call  $f_{01}, \ldots, f_{0(B+1)}$ the resulting nodes.\\
-- Compress each of the $a_j+1$ cliques $K_k$ in the flower $\Fj$, for $j \in [\l]$, into one node. 
Call them $f_{j1}, \ldots, f_{j(a_j+1)}$.\\
-- Compress each set $S'_i$ of independent nodes in $\Quk$ into one node.
Call them $s'_1, \cdots s'_{k+1}$.\\
-- Compress each set $S''_i$ of independent nodes in $\Qdk$ into one node.
Call them $s''_1, \cdots s''_{k+1}$.\\
As a consequence we have: \\
$V(\H1) = \{z, f_{01}, \ldots, f_{0(B+1)} \}\cup
          \bigcup_{j \in[\l]} \{y_j, f_{j1}, \ldots, f_{j(a_j+1)} \} \cup
					\{ s'_1, \cdots s'_{k+1} \}  \cup \{ s''_1, \cdots s''_{k+1} \}$
\begin{eqnarray*}
E(\H1) & = & \{(z, f_{0i}) \ | \ i \in[B+1] \}\cup
          \bigcup_{j \in[\l]} \{(y_j, f_{ji}) \ | \  i \in[a_j+1] \} \cup
	\{ (s'_i, s'_{i+1}),  (s''_i, s''_{i+1}) \ | \  i \in[k] \}  \cup \\
	& &				\{(x,v) \ | \ x \in\{z, f_{0i} \ | \ i \in [B+1] \} \cup
					          \{y_j, f_{ji}  \ | \ j \in[\l], i \in [a_j+1] \},
										v \in \{s'_i, s''_i \ | \ i \in [k+1]\} \}
\end{eqnarray*}

\noindent
The type graph $\Hdue$ of $\H1$ is obtained as follows:\\
-- Compress the set of independent nodes $\{f_{01}, \ldots, f_{0(B+1)}\}$ into a node. Call it 	$f_0$. \\
-- Compress the set of independent nodes $\{f_{j1}, \ldots, f_{j(a_j+1)}\}$ into a node. Call it 	$f_j$.\\
Hence, 
$V(\Hdue) = \{z, f_0\} \cup  \bigcup_{j \in[\l]} \{y_j, f_j\} \cup 
            \{ s'_1, \cdots s'_{k+1} \}  \cup \{ s''_1, \cdots s''_{k+1} \}$
and
\begin{eqnarray*}
E(\Hdue) & = & \{(z, f_0)\}  \cup  \bigcup_{j \in[\l]} \{(y_j, f_j)\} 
            \cup
					\{ (s'_i, s'_{i+1}), (s''_i, s''_{i+1})\ | \  i \in[k] \} \cup \\
		& & \{(x,v) \ | \ x \in\{z, f_0 \}\cup \{y_j, f_j  \ | \ j \in[\l] \},
										v \in \{s'_i, s''_i \ | \ i \in [k+1]\} \}
\end{eqnarray*}

\noindent
The type graph $\Htre$ of $\Hdue$ is obtained as follows:\\
-- Compress the clique consisting of one edge $\{(z, f_0)\}$ into a node. Call it $z'$.\\
-- Compress the cliques each consisting of one edge $\{(y_j, f_j)\}$ into a node, for $j \in [\l]$. Call it $y'_j$.\\
Hence, 
$V(\Htre) = \{z', y'_j \ | j \in[\l]\} \cup 
            \{ s'_1, \cdots s'_{k+1} \}  \cup \{ s''_1, \cdots s''_{k+1} \}$
and
\begin{eqnarray*}
E(\Htre) & = & 
					\{ (s'_i, s'_{i+1}), (s''_i, s''_{i+1})\ | \  i \in[k] \}\cup \\
		& & \{(x,v) \ | \ x \in\{z',  y'_j \ | j \in[\l]\},
										v \in \{s'_i, s''_i \ | \ i \in [k+1]\} \}
\end{eqnarray*}

\noindent
The type graph $\Hq$ of $\Htre$ is obtained as follows:\\
-- Compress the set of independent nodes  $\{z', y'_j \ | j \in[\l]\}$ into a node. Call it $y''$.\\
Hence, 
$V(\Hq) = \{y''\} \cup 
            \{ s'_1, \cdots s'_{k+1} \}  \cup \{ s''_1, \cdots s''_{k+1} \}$
and
\begin{equation*}
E(\Hq) = 
					\{ (s'_i, s'_{i+1}), (s''_i, s''_{i+1})\ | \  i \in[k] \}\cup  \{(y'',v) \ | 	v \in \{s'_i, s''_i \ | \ i \in [k+1]\} \}
\end{equation*}
It is immediate to see that $\Hq$ is a base graph and that $|V(\Hq)|=2k+3$.
\end{proof}

\begin{proof}{ \emph{of Theorem \ref{teo-eqc}.}}
Given an instance $\langle A=\{a_1,\cdots,a_\l\}, k, B \rangle$ of \Bin,
we use the above construction to create an instance $\langle G=(V,E), \dnd(G) \rangle$ of \Eq parameterized by iterated type partition.
Lemma \ref{lemma-iff} show the correctness of our reduction and Lemma \ref{lemma-itp} provides the iterated type partition of the constructed graph, showing that our new parameter $\dnd(G)$ is linear in the original parameter $k$. 
\end{proof}

\subsection{Neighborhood Diversity: an FPT algorithm}
We prove here that the Equitable Coloring problem admits a FPT algorithm  with respect to neighborhood diversity.
W.l.o.g. we assume that the number  of nodes in the  input graph $G=(V,E)$ is a multiple of the number of colors $k$ (this can be  attained by adding a clique of $|V|-\left(\left\lfloor |V|/ k \right \rfloor \cdot k \right)$ nodes connected to a node in $G$ in such a way the answer to the equitable $k$-coloring question remains unchanged).
\\
Let then $r=|V|/k$. Any equitable $k$-coloring  of $G$ partitions $V$ into $k$ classes of colors, say $C_1,\ldots,C_k$, s.t.  $C_\ell$ is an independent set of G of size $|C_\ell|=r$, for $\ell=1,\ldots, k$.

If we consider now  the type partition $\{V_1, \ldots, V_\t\}$ of $G$ and the corresponding   type graph $H=(V(H)=\{1,\ldots,t\}, E(H))$, we trivially have that: {\em Two nodes $u,v\in V$ are independent in $G$  iff $v\in V_i$ and $u\in V_j$, with $i,j\in V(H)$,  such that either $i$ and $j$ are independent nodes of   $H$ or $i=j$ and $V_i$ induces an independent set in $G$.}
This immediately implies that for each color class $C_\ell$ of the equitable coloring of $G$ there exists an independent set   $I_\ell=\{{\ell_1},\ldots,{\ell_{\rho}}\}$ of $H$ such that

 $\sum_{s=1}^{\rho} |C_\ell\cap V_{\ell_s}|=r$ and  

 $|C_\ell\cap V_{\ell_s}|=1$  for each $s=1,\ldots,\rho$   such that  $V_{\ell_s}$ induces a clique.

\noindent
Let now $\I$ denote the family of all independent sets in $H$. From the above reasoning we have that, given any equitable $k$-coloring  of $G$, we can  associate  to each  $I\in \I $  a separate set of $z_I\geq 0$ colors so that 
\begin{enumerate}
\item   $\sum_{I\in \I} z_I=k$, 
\item  for each $i\in V(H)$ it holds 
that the sum over all $I\in \I$ such that $i\in I$ of the number of nodes in $V_i$ that (in the coloring of $G$) are colored with one of the $z_I$ colors associated to $I$ (this number is at most $z_I$ if $V_i$ induces a clique in $G$, but can be larger if $V_i$ induces  an independent set) is exactly $|V_i|$.
\item  for each $I\in \I$  it holds 
that the sum over all $i\in V(H)$ of the number of nodes  of $V_i$ that that are colored in $G$  with one  of the $z_I$ colors associated to $I$ is  $r\cdot z_I$.
\end{enumerate}
The above conditions can be expressed by the following linear program on the
 variables  $z_I$ for each $I\in \I$ and  $z_{I,i}$  for each $I\in \I$ and for each $i\in I$.
\begin{enumerate}
 \item   $\sum_{I\in \I} z_I=k$;
   \item   $\sum_{I\, :\, i\in I} z_{I,i}=|V_i|$,  \mbox{for each $i\in V(H)$};
  \item    $\sum_{i\in I} z_{I,i} -r\cdot z_I=0$,  \mbox{for each $I\in \I$};
  \item    $z_{I} - z_{I,i}\geq 0$  for each $I\in \I$ and $i\in I$ such that $V_i$ is a clique;
  \item    $z_{I,i} \geq 0$  for each $I\in \I$ and $i\in V(H)$.
\end{enumerate}
From the above reasoning, it is clear that if the graph $G$ admits an equitable $k$-coloring, then there exists an assignation of values to the variables 
$z_I$  and  $z_{I,i}$,   for each $I\in \I$ and  $i\in I$, that satisfies the above system.

We show now that from any  assignation of values to the variables 
$z_I$  and  $z_{I,i}$ that satisfies the above system, we can obtain an equitable $k$-coloring  of $G$.

\smallskip
\noindent
 $\bullet$ For each independent set $I\in \I$, such that $z_I>0$, repeat the following procedure:
\begin{itemize}
\item Select a   set of $z_I$ new colors, say $c^I_1, \dots, c^I_{z_I}$ (to be used only for nodes in $I$);\\
 We notice that (by 3.) the total number of nodes to be colored is $r\cdot z_I$;
\item Consider the list of colors     $c^I_1,c^I_2, \dots, c^I_{z_I}, c^I_1,c^I_2, \dots, c^I_{z_I}\ldots, c^I_1,c^I_2, \dots, c^I_{z_I}$ (obtained cycling  for $r$ times on $c^I_1,\dots, c^I_{z_I}$); assign the colors starting from the beginning of the list as follows:
For each  $i\in V(H)$,    select  $z_{I,i}$ uncolored nodes in $V_i$ (it can be done by 2.)  and assign to them the next unassigned $z_{I,i}$ colors in the  list. 
\end{itemize}
  In this way  each color is used exactly $r$ times. Moreover, since each independent set uses a separate set of colors,   the total number of colors  is $\sum_{I\in \I} z_I=k$ (crf. 1.).
 Furthermore,  in  each $V_i$ that induces a clique in $G$,  we color   $z_{I,i}\leq z_{I}$  nodes (this holds by 4.). Such nodes  get  colors which are consecutive  in the list, hence they are different. 
Summarizing, the desired equitable $k$-coloring  of $G$   has been obtained.

Finally, we  evaluate the time to solve the above system.
We use the well-known result that Integer Linear Programming is FPT parameterized by the number of variables.
\begin{itemize}
\item[] \textbf{$\ell$-Variable Integer Linear Programming Feasibility}
\\
\textbf{Instance:}  A matrix $A \in Z^{m\times \ell}$ and a vector $b \in Z^m$.
\\
\textbf{Question:} Is there a vector $x \in Z^\ell$  such that $Ax \geq b$?
\end{itemize}
\begin{proposition} \label{prop}\cite{FLMRS}
$\ell$-Variable Integer Linear Programming Feasibility 
can be solved in time $O(\ell^{2.5t+o(\ell)} \cdot L)$ where $L$ is the number of bits in the input.
\end{proposition}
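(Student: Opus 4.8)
\smallskip
\noindent\textbf{Proof proposal.} Proposition~\ref{prop} is the classical statement that integer linear programming feasibility is fixed-parameter tractable in the number of variables, originally due to Lenstra and sharpened to the running time stated here through the work of Kannan and of Frank and Tardos; the plan is therefore not to reprove it from scratch but to recall the ingredients one would assemble. Write $P=\{x\in\mathbb{R}^{\ell}:Ax\ge b\}$ for the rational polyhedron defined by the instance, so the task is to decide whether $P\cap\mathbb{Z}^{\ell}\ne\emptyset$. The first step is a normalisation of the input: by the simultaneous Diophantine approximation lemma of Frank and Tardos one can, in time polynomial in $\ell$ and $L$, replace $A,b$ by integer data of bit length polynomial in $\ell$ alone that define a polyhedron with the same face lattice, hence in particular one that contains an integer point if and only if $P$ does. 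This is exactly what makes the dependence on the input length $L$ linear; from here on every number manipulated has bit length $\ell^{O(1)}$.

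The second step is Lenstra's recursion on the dimension $\ell$. If $\ell=0$ (or if linear programming already certifies $P=\emptyset$) we answer directly. Otherwise, using the ellipsoid method---a weak separation oracle for $P$ is available for free since $P$ is given by its inequalities---compute an ellipsoid $E$ with $E\subseteq P\subseteq \ell^{O(1)}\cdot E$, the dilation taken about the centre of $E$; then run the LLL basis-reduction algorithm on $\mathbb{Z}^{\ell}$ transported by the affine map sending $E$ to the unit ball. Either the reduced basis already exhibits a lattice point of $P$---and we are done---or its shortest vector pushes forward to a nonzero $c\in\mathbb{Z}^{\ell}$ along which $P$ is \emph{flat}, meaning $\max_{x\in P}c^{\top}x-\min_{x\in P}c^{\top}x\le w(\ell)$ for a function $w$ of $\ell$ only (the flatness theorem, with Kannan's quantitative bound on $w$). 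In the latter case every integer point of $P$ lies on one of at most $w(\ell)+1$ parallel integer hyperplanes $c^{\top}x=\delta$; for each admissible integer $\delta$ apply a unimodular change of coordinates making $c$ a coordinate vector, fix that coordinate to $\delta$, and recurse on the resulting $(\ell-1)$-variable instance, answering ``feasible'' as soon as one branch does.

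The third step is the running-time accounting. The recursion tree has depth $\ell$ and at most $w(i)+1$ children at depth $i$, hence at most $\prod_{i=1}^{\ell}(w(i)+1)$ leaves, while at each node the ellipsoid computation and the LLL call cost a polynomial in $\ell$ and in the (now $\ell^{O(1)}$-bounded) bit length; combining these with Kannan's flatness bound yields the factor $\ell^{2.5\ell+o(\ell)}$, the remaining $L$ being absorbed by the Frank--Tardos preprocessing. I expect this last accounting to be the genuine obstacle: one must track how the flatness constant, the branching factor, and the per-node reduction cost multiply so that the exponent comes out as $2.5\ell+o(\ell)$ rather than the much weaker $\ell^{\Theta(\ell^{2})}$ implicit in Lenstra's original argument. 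For the use made of the proposition in this paper only its qualitative content matters: in the proof of Theorem~\ref{teo-eqc-nd} the linear system built above has $\ell\le(t+1)2^{t}$ variables and bit length $L=\mathrm{poly}(2^{t},n)$, so Proposition~\ref{prop} solves it in time $g(t)\cdot\mathrm{poly}(n)$ for a suitable function $g$, which---together with the $O(2^{t}\,\mathrm{poly}(t))$ enumeration of the independent sets of $H$ and the polynomial-time construction of the type partition and of the coloring from a feasible solution---completes the FPT algorithm.
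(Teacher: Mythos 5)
The paper does not prove this proposition at all: it is stated as a verbatim quotation of a known result, credited to \cite{FLMRS} and ultimately resting on the Lenstra--Kannan--Frank--Tardos machinery, so there is no in-paper argument to compare yours against. Your sketch correctly recalls exactly that machinery: Frank--Tardos preprocessing to make the dependence on the input length $L$ linear, Lenstra-style dimension reduction via ellipsoid rounding and lattice basis reduction, the flatness theorem with Kannan's quantitative bound to control the branching, and the product-over-levels accounting that yields the $\ell^{2.5\ell+o(\ell)}$ factor (note that the exponent ``$2.5t$'' in the paper's statement is a typo for $2.5\ell$). One technical slip worth flagging: in Lenstra's dichotomy the flat direction $c$ is not obtained from the \emph{shortest} vector of the reduced basis of the primal lattice, but from the direction associated with the \emph{longest} reduced basis vector --- equivalently, a short vector of the \emph{dual} lattice with respect to a norm adapted to the body; the shortest primal vector says nothing about the width of $P$. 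This does not undermine the route, and, as you correctly observe in your closing paragraph, the paper only ever uses the proposition as a black box in the proof of Theorem~\ref{teo-eqc-nd}, which is precisely how it is treated there.
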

{ Since $|V(H)|=\nd(G)$, our system uses at most $O(\nd(G) 2^{\nd(G)})$ variables: $z_I$  for  $I\in \I$  and  $z_{I,i}$ for  $I\in \I$ and  $i\in I$.
We have $O(\nd(G) 2^{\nd(G)})$  constraints and the coefficients are upper bounded by $r=|V|/k$. Therefore, Theorem \ref{teo-eqc-nd} holds.}

\section{Algorithms}\label{sec-algo}

In this section, we provide some FPT algorithms with respect to iterated type partition.
 In order to solve a problem $P$ on an input graph $G$,   the general algorithm scheme  is: 
\begin{enumerate}
	\item[1)] Iterate by generating the whole type graph  sequence of $G$. 
	\item[2)] On each graph $G'$ in the type graph sequence, a  generalized version $P'$  of the original problem is defined (with $P'$ in $G'$ being equivalent to $P$ in $G$). 
	\item[3)] Optimally solve $P'$  on the base graph and reconstruct the solution on the reverse type graph sequence (hence solving $P$ in $G$).
\end{enumerate}
If the    construction of  the solution for  $P'$ (at  step 2), can be done  in polynomial time and the time to solve $P'$ on the base graph is {$f$,} 
then the whole algorithm needs  {\com{f}} time.
We stress that this is indeed the case for the  algorithms below.

\def \stds{{stds}}

\subsection{Dominating set}
In order to present our FPT algorithm for the minimum dominating set problem in $G$ with parameter $\dnd(G)$, we consider  the following generalized  dominating set problem.
\begin{definition}
Given a graph $G=(V,E)$ and a set of nodes $Q \subseteq V$, 
a  {\em semi-total dominating set} of $G$  with respect to $Q$, 
called $Q$-{\em \stds} of $G$, is a set $D\subseteq V$ such that every node in $Q$ is adjacent to a node in $D$, and 
every other node  is either a node in $D$ or it is dominated by 
a node in $D$. 
  The set $D$ is said an {\em optimal} $Q$-{\em \stds} of $G$, if its size is minimum among all the $Q$-{\em \stds} of $G$.
\end{definition}
Clearly, when $Q=V$ the semi-total dominating set problem  is the  total domination problem 
\cite{CDH}.
If $Q=\emptyset$, the semi-total dominating set problem 
becomes the dominating set  problem.

\begin{lemma}\label{only-one}
Let $G=(V,E)$ be a connected graph and 
let $\V=\{V_1,\cdots, V_t\}$ be the type partition of $G$. 
Let $Q \subseteq V$.
There exists an optimal $Q$-\stds \ $D$ of $G$ such that 
\begin{equation}\label{one}
|V_x \cap D| \leq 1  \qquad \mbox{for each $x\in[t]$. }
\end{equation}
\end{lemma}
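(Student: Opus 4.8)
The plan is to start from an arbitrary optimal $Q$-\stds{} $D$ of $G$ and transform it, without increasing its size and without destroying the semi-total domination property, into one that meets at most one node of each type class $V_x$. The key structural fact I would use is the one recalled in Section \ref{sec-nd}: for any two type classes $V_x, V_y$ there is either a complete bipartite graph or no edge at all between them, and each $V_x$ itself induces either a clique or an independent set. Consequently, whether a node $w$ is dominated by $D$, or has a neighbor in $D$, depends only on the \emph{set of type classes} that $D$ intersects (plus, when $w$'s own class is a clique, on whether $D$ meets that class at all), and not on \emph{how many} vertices of each class lie in $D$.

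First I would fix an optimal $Q$-\stds{} $D$ and suppose some class $V_x$ satisfies $|V_x\cap D|\ge 2$. I distinguish two cases according to whether $V_x$ is a clique or an independent set. If $V_x$ is an independent set, pick any $u\in V_x\cap D$ and replace $D$ by $D':=(D\setminus V_x)\cup\{u\}$, i.e.\ throw away all but one chosen vertex of $V_x$; this strictly decreases $|D|$, contradicting optimality (after checking $D'$ is still a valid $Q$-\stds, see below), so in fact this case cannot occur for an optimal $D$. If $V_x$ is a clique of size $\ge 2$, I cannot simply delete vertices (a vertex of $V_x$ might be dominated only from \emph{inside} $V_x$, via the clique edges, or might need a neighbor inside $Q$), so instead I would move the surplus: let $V_x\cap D=\{u_1,\dots,u_j\}$ with $j\ge 2$; since $G$ is connected and (after the independent-set case) we may assume $t\ge 2$, the class $V_x$ has a neighbor class $V_y$ in the type graph, so every vertex of $V_y$ is adjacent to every vertex of $V_x$. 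Replace $u_2,\dots,u_j$ by arbitrarily chosen vertices of $V_y$, obtaining $D'':=(D\setminus\{u_2,\dots,u_j\})\cup\{v_2,\dots,v_j\}$ with $v_2,\dots,v_j\in V_y$ distinct (possible once $|V_y|\ge j-1$; if not, a little more care is needed — see the obstacle below). Then $|D''|\le|D|$, and iterating over all offending classes terminates.

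The routine part is verifying that each replacement preserves the $Q$-\stds{} property, and this is where the "type classes, not multiplicities" principle does the work: moving vertices within a class or to a fully-adjacent neighbor class changes the family of classes met by $D$ in a controlled way, so every $w$ that had a neighbor in $D$ still has one, and every $w\notin D''$ that was dominated is still dominated — one checks the cases $w\in V_x$, $w\in V_y$, $w$ in a class adjacent to $V_x$ or $V_y$, and $w$ elsewhere. The main obstacle is precisely the clique case when the surplus $j-1$ exceeds the size of every available neighbor class $V_y$; to handle it I would argue that one can spread the moved vertices over \emph{several} neighbor classes, or — more cleanly — observe that if $V_x$ is a clique needing $\ge 2$ vertices of $D$ only because its own vertices must be dominated or $Q$-dominated from within, then in fact keeping just enough vertices of $V_x$ and shifting the rest elsewhere still suffices; formalizing that $D$ never needs more than one "internal" vertex of a clique class plus at most one "external" neighbor, and that these can be consolidated, is the delicate step. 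An alternative, possibly slicker, route is an exchange/minimality argument: among all optimal $Q$-\stds{}s choose one minimizing $\sum_x \binom{|V_x\cap D|}{2}$, and show any class with two chosen vertices admits a strictly-improving local move, contradicting the choice.
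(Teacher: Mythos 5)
Your overall strategy---a local exchange within type classes, exploiting the fact that two classes are joined either completely or not at all---is the same as the paper's, but both of your concrete moves have genuine problems. The independent-set case is wrong as stated: you claim that replacing $D$ by $D'=(D\setminus V_x)\cup\{u\}$ strictly decreases the size while remaining valid, so that this case ``cannot occur for an optimal $D$.'' It can occur. Take $G=C_4$ with type partition $\{a_1,a_2\},\{b_1,b_2\}$ (two independent classes joined completely) and $Q=\emptyset$: then $D=\{a_1,a_2\}$ is an optimal dominating set meeting one class twice, yet your $D'=\{a_1\}$ is not a dominating set, because $a_2$, having been evicted from $D$, now needs a neighbour in $D'$ and has none. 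Your verification principle (``every $w\notin D''$ that was dominated is still dominated'') misses exactly these vertices: a member of $D$ needs no dominator, but once removed it does. The paper splits this case in two: if some class adjacent to $V_x$ already meets $D$, pure deletion works and yields the contradiction; otherwise one must delete $V_x\setminus\{u\}$ \emph{and add} one vertex of an adjacent class, which gives only $|D'|\le|D|$---no contradiction, just a new optimal solution with one fewer violation.

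In the clique case your reason for refusing to delete is largely illusory, and the substitute move is worse. If $u,v\in V_x\cap D$ and $V_x$ is a clique, every vertex dominated by $v$ (inside or outside $V_x$) is also adjacent to $u$, so $D-\{v\}$ fails only in one situation: the surviving $u$ lies in $Q$ and $v$ was its unique neighbour in $D$; a single swap $D-\{v\}\cup\{w\}$, with $w$ a neighbour of $u$ in another class, then repairs it while keeping the size. Your alternative---shipping all $j-1$ surplus vertices of $V_x$ into one neighbouring class $V_y$---requires $|V_y|\ge j-1$ (which you acknowledge) and, worse, can create the new violation $|V_y\cap D''|\ge 2$, so the iteration need not terminate and your proposed potential $\sum_x\binom{|V_x\cap D|}{2}$ need not decrease under this move. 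As written the argument does not go through; the repairs needed amount to reproducing the paper's case analysis.
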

\proof

Let $D$ be an optimal $Q$-\stds \  of $G$.
Assume there exists  $x\in[t]$  such that $|V_x \cap D| \geq 2$.
We distinguish two cases according to  $V_x$ being  a clique or an independent set.

Let $V_x$ be a clique. Let $u$ and $v$ be two nodes in $V_x \cap D$. 
Let $u \not \in Q$.
Since $u$ is a neighbor of $v$ and since $u$ and $v$ share the same neighborhood, we have that the set $D'=D-\{v\}$ is a $Q$-\stds \  of $G$. Furthermore, $|D'| < |D|$ and this is not possible since $D$ is optimal.
Assume now that $u \in Q$. If there exists a neighbor $w$ of $u$ with 
$w \in V_y \cap D$, for some $y\neq x$, then as above $D'=D-\{v\}$ is a $Q$-\stds \  of $G$ and  $|D'| < |D|$.
If, otherwise, node $u$ has no neighbor in $D$ except for those in $V_x$, then we can choose any neighbor $w$ of $u$ with 
$w \in V_y \cap D$,  for some $y\neq x$, and $D'=D-\{v\} \cup\{w\}$ is a $Q$-\stds \  of $G$ and  $|D'| = |D|$.

Let $V_x$ be an independent set. Let $u$ be any node in $V_x \cap D$.
If there exists a neighbor $w$ of $u$ with $w \in V_y \cap D$, 
for some $y\neq x$, then the set $D'$ obtained from $D$ removing all 
the nodes in $V_x$ except for $u$ is again a $Q$-\stds\  since the neighbors of nodes in $V_x$ are dominated by $u$ and all the nodes in $V_x$ are dominated by $w\in V_y$. Furthermore, $|D'| < |D|$.
Otherwise, we have that $V_x \subset D$ and for each neighbor 
$w$ of $u$ it holds $w \in V_y$, for  $y\neq x$, and  $w \not \in D$.
Hence, the set $D'$ obtained from $D$ removing all the nodes in $V_x$ 
except for $u$ and adding to it a node $w \in V_y$, where $y$ is such that 
$V_y\cap D=\emptyset$, is a $Q$-\stds \  of $G$. 
Furthermore, $|D'| \leq |D|$.

Repeating the above argument for each $x\in[t]$  such that $|V_x \cap D| \geq 2$,
we obtain an optimal solution satisfying (\ref{one}).
\qed

The FPT algorithm \Domination$\ $ recursively constructs the graphs in the type graph sequence of $G$, until the base graph is obtained.
It is initially called with \Domination($G,\emptyset$).
At each recursive step, the algorithm \Domination($H ,Q $), on  a graph $H$ and a set 
$Q\subseteq V(H)$ of nodes that need to have a  neighbor in the solution set,
checks if $H$ is a base graph or not.
In case $H$ is a base graph, then
the algorithm searches by brute force the $Q$-stds of $H$ 
and returns it.
If $H$ is not a base graph  then the algorithm 
first constructs the type graph $H'$
and conveniently selects nodes in $V(H')$ to assemble a set 
$Q'$ of nodes that need to have a  neighbor in the solution set, then it uses 
the set $D'$ of nodes in $V(H')$ returned by \Domination($H',Q'$)
to construct the output set $D \subseteq V(H)$. The nodes of the returned set $D$ are chosen 
selecting exactly one node from each metavertex $V_x$ having $x \in D'$.

Figure \ref{fig:alg1} gives an example of the execution of  Algorithm \ref{alg} on the graph $G$ in  Fig. \ref{fig:img1}.

\begin{algorithm}[tb!]
\SetCommentSty{footnotesize}
\SetKwInput{KwData}{Input}
\SetKwInput{KwResult}{Output}
\DontPrintSemicolon
\caption{ \ \   \textbf{Algorithm} \Domination($H,Q$) \label{alg}}
\KwData { A graph $H=(V(H),E(H))$, a set $Q\subseteq V(H)$.
}
\setcounter{AlgoLine}{0}
\uIf{ $H$ is a base graph }{ 
$D=V(H)$\\
\lFor{\textbf{each} $S\subseteq V(H)$}{ 
	\textbf{if} {(($S$ is $Q$-\stds \ of $H$) \textbf{and} ($|S|<|D|$)) \textbf{then} $D=S$}}
}
\uElse{ 
Let $V_1,\cdots, V_t$ be the type partition of $H$ and let $H'$ be the type graph of $H$. \\
$Q'=\{ x \in V(H') \ | \  (V_x \cap Q\neq\emptyset   \mbox{ or } 
 V_x \mbox{ is an independent set})\}$\\
$D'$ = \Domination$(H',Q')$ \\
$D=\bigcup_{x \in D'} \{u_x\}$, where $u_x$ is an arbitrarily chosen node in $V_x$\\
}
 \Return $D$
\end{algorithm}
\begin{figure}[tb!]
	\centering
	\includegraphics[width=\textwidth]{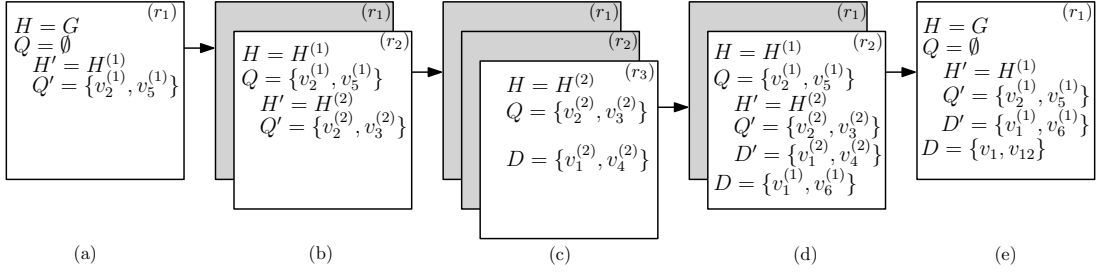} \vspace*{-0.6truecm}
		\caption{The recursive execution of the algorithm \ref{alg} on the graph $G$ depicted in Fig. \ref{fig:img1}: ((a) and (b)), since the input graph is not a base graph, their type partition as well as the set $Q'$ are computed and passed to the next recursive level; (c), $H$ is a base graph and then an optimal solution is computed exploiting a brute force approach;  ((d) and (e)),  an optimal solution $D=\{v_1,v_{12}\}$ is reconstructed using the solution $D'$ obtained on the reverse type
graph sequence.}
	\label{fig:alg1}
\end{figure}
\begin{lemma} \label{recursive-step}
Let $H$ be not a base graph and let $Q \subseteq V(H)$.
Let $V_1,\cdots, V_t$ be the type partition of $H$ and let $H'$ be its type graph.
If $Q'=\{ x \in V(H') \ | \  V_x \cap Q\neq\emptyset  \mbox{ or } 
 V_x \mbox{ is an independent set}\}$ and
$D'$ is an optimal $Q'$-\stds \ of $H'$ then the set $D$ returned 
by \Domination$(H,Q)$ 
is an optimal $Q$-\stds \ of $H$.
\end{lemma}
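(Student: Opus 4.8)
The plan is to establish the two halves separately: first that the returned set $D$ is a valid $Q$-\stds{} of $H$, and then that it has optimal size. Throughout I will exploit the structure of the type partition, namely that $V_x$ and $V_y$ are either fully joined or fully non-adjacent, and that each $V_x$ is a clique or an independent set. Recall that $D=\bigcup_{x\in D'}\{u_x\}$ with one representative $u_x\in V_x$ per $x\in D'$.

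\emph{Validity.} I would show that every node $v\in V(H)$ is either in $D$, dominated by $D$, or (if $v\in Q$) has a neighbor in $D$. Fix $v\in V_x$. If $x\in D'$, then $V_x$ contains the representative $u_x\in D$; if $V_x$ is a clique this means $v$ is either $u_x$ itself or adjacent to $u_x$, and if $V_x$ is an independent set then $x\in Q'$ by construction, so $x$ has a neighbor $y\in D'$ in $H'$, whence $u_y\in D$ is adjacent to $v$. If $x\notin D'$, then since $D'$ is a $Q'$-\stds{} of $H'$ the metavertex $x$ is dominated by some $y\in D'$ (and additionally, if $x\in Q'$, has a neighbor in $D'$); in either case there is $y\in D'$ adjacent to $x$ in $H'$, and then $u_y\in D$ is adjacent to $v$ in $H$. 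It remains to handle the membership requirement: if $v\in Q$, then $x\in Q'$, so whichever of the two cases above applies, we found a neighbor of $v$ in $D$ (when $x\in D'$ and $V_x$ is a clique, note $v\neq u_x$ is still adjacent to $u_x$; the only subtlety is $v=u_x\in D\cap Q$, but then since $x\in Q'\subseteq$ the domain handled, $x$ being in a $Q'$-\stds{} that is itself required to have a neighbor forces a neighbor $y\in D'$, giving $u_y$). I would write this case analysis out cleanly.

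\emph{Optimality.} Here I would argue by contradiction. Suppose $D$ is not optimal, so by Lemma~\ref{only-one} there is an optimal $Q$-\stds{} $\tilde D$ of $H$ with $|\tilde D|<|D|$ and $|V_x\cap\tilde D|\le 1$ for all $x$. Let $\tilde D'=\{x\in V(H')\mid V_x\cap\tilde D\neq\emptyset\}$. Then $|\tilde D'|=|\tilde D|<|D|=|D'|$. The crux is to show $\tilde D'$ is a $Q'$-\stds{} of $H'$, contradicting the optimality of $D'$. So fix $x\in V(H')$. If $x\in\tilde D'$ we must, when $x\in Q'$, find a neighbor of $x$ in $\tilde D'$; if $x\notin\tilde D'$ we must find a neighbor of $x$ in $\tilde D'$ in all cases. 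Pick any $v\in V_x$. If $x\in Q'$ because $V_x\cap Q\neq\emptyset$, choose $v$ in that intersection; since $\tilde D$ is a $Q$-\stds{}, $v$ has a neighbor $w\in\tilde D$, say $w\in V_y$; if $y\neq x$ then $y\in\tilde D'$ is adjacent to $x$ in $H'$ and we are done, while $y=x$ forces $V_x$ to be a clique, and then to cover $v\in Q$ one needs $w\neq v$ or another neighbor — more carefully, if the only neighbors of $v$ in $\tilde D$ lie in $V_x$ then since $|V_x\cap\tilde D|\le1$ we get $w$ is that unique node and $w\neq v$, so actually $v\notin\tilde D$; this is consistent and we still need a neighbor of $x$ outside, so I would instead note that if $V_x$ is a clique and $x\in Q'$ then $V_x\cap Q\neq\emptyset$ and argue that any node of $V_x\cap Q$ having all its $\tilde D$-neighbors inside $V_x$ is impossible unless $|V_x|\ge2$ and exactly one node of $V_x$ is in $\tilde D$ — in which case $x\in\tilde D'$ and $x$ need not have a neighbor in $\tilde D'$ only if $x\notin Q'$; since $x\in Q'$ here, this case cannot occur, so there must be a neighbor of $x$ in $\tilde D'\setminus\{x\}$. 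If instead $x\in Q'$ because $V_x$ is an independent set: every node of $V_x$ must be dominated, and an independent set with $\ge2$ nodes cannot be dominated from within, so again a neighbor metavertex lies in $\tilde D'$; if $|V_x|=1$ then $V_x$ is treated as a clique by convention, contradiction, so $|V_x|\ge2$ and we are fine. Finally if $x\notin\tilde D'$: pick any $v\in V_x$; it is not in $\tilde D$, so it is dominated by some $w\in\tilde D$, $w\in V_y$ with $y\neq x$, giving the required neighbor. Assembling these cases shows $\tilde D'$ is a $Q'$-\stds{} of $H'$ of size $<|D'|$, contradicting optimality of $D'$.

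\emph{Main obstacle.} The routine direction is validity; the delicate part is the optimality argument, specifically the bookkeeping around a clique metavertex $V_x$ with $x\in Q'$ and $|V_x\cap\tilde D|=1$ — one has to verify that the representative structure forced by Lemma~\ref{only-one} does not let $\tilde D'$ "cheat" the $Q'$-membership constraint at such an $x$. I expect the clean way to handle it is to observe that the definition $Q'=\{x: V_x\cap Q\neq\emptyset \text{ or } V_x \text{ independent}\}$ was engineered precisely so that $x\in Q'$ forces, in every case, the existence of an \emph{external} neighbor of $x$ in any $Q$-\stds{} respecting~(\ref{one}); spelling that implication out is the heart of the proof.
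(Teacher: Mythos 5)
Your overall strategy coincides with the paper's: prove validity directly from the complete-join structure of the type partition, and prove optimality by taking an optimal solution $\tD$ normalized via Lemma~\ref{only-one}, projecting it to $\tD'=\{x \mid V_x\cap\tD\neq\emptyset\}$, and contradicting the optimality of $D'$. Your validity half is correct and essentially identical to the paper's. The problem lies in the optimality half, at exactly the spot you flag as delicate.

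The gap is this: you must show that every $x\in Q'$ has a neighbour in $\tD'$. When $x\in Q'$ because $V_x$ is an independent set, your argument works (some node of $V_x\setminus\tD$ exists and can only be dominated from outside $V_x$). But when $x\in Q'$ only because $V_x\cap Q\neq\emptyset$ and $V_x$ is a clique, the witness $v\in V_x\cap Q$ may have its required $\tD$-neighbour \emph{inside} $V_x$: take $V_x\cap\tD=\{w\}$ with $w\neq v$ and $w\notin Q$. Then $w$ satisfies the adjacency requirement for every node of $V_x\cap Q$ and dominates the rest of $V_x$, and nothing forces any neighbouring metavertex to meet $\tD$. Your resolution — ``since $x\in Q'$ here, this case cannot occur'' — assumes the conclusion; the case does occur. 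Concretely, let $V_1=\{a,b\}$ be a clique completely joined to an independent set $V_2=\{c,d\}$, and let $Q=\{a\}$. Then $Q'=\{1,2\}$, the only $Q'$-\stds{} of $H'=K_2$ is $\{1,2\}$, yet $\{b\}$ is already a $Q$-\stds{} of $H$ of size $1$; its projection $\tD'=\{1\}$ is not a $Q'$-\stds{}, so the contradiction you want never materialises (and in fact the returned $D$ is not optimal here). For what it is worth, the paper's own proof makes the same unjustified step — it asserts that every $u\in V_x\cap\tD$ has a $\tD$-neighbour in a different metavertex, which is guaranteed only when $u\in Q$ — so you have correctly located the weak point, but your patch is circular. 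Repairing the argument needs an extra invariant on the sets $Q$ actually produced by the algorithm (for instance, that a clique metavertex meeting $Q$ lies entirely inside $Q$), or a modified definition of $Q'$; it cannot go through for arbitrary $Q\subseteq V(H)$ as stated.
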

\begin{proof} 
We first prove that the set $D$ returned by \Domination$(H,Q)$ is a 
$Q$-\stds \ of $H$, then we  prove its optimality.
We distinguish two cases according to the fact that a node $v \in V(H)$ 
is a node in $Q$ or not. W.l.o.g. assume that $v \in V_x$, 
for some $x \in [t]$.\\
- \ If $v \in Q$ then $V_x \cap Q\neq \emptyset$ and by the definition of $Q'$ 
we have that $x\in Q'$.
Hence, since $D'$ is a $Q'$-\stds \ of $H'$, there exists $y \in D'$ that is a 
neighbor of $x$ in $H'$. By Algorithm 1 (see line 8) there exists a node $u_y \in V_y\cap D$. Considering that each node in $V_y$ is a neighbor  of each node in $V_x$ (since $(x,y) \in E(H')$), we have that $v$ is dominated by $u\in D$. \\
- \ Let  $v \in V - Q$.  We know that  $D'$ is a  $Q'$-\stds \ of $H'$.
Hence, if either $x \in Q'$ or $x \not \in Q' \cup D'$ 
we can prove, as in the previous case, that 
there exists  $u \in D$ that dominates $v$. 
Assume now that $x \not \in Q'$ and $x \in D'$  (i.e., $x$ can be not dominated in $H'$).
By the definition of $Q'$ we have that $V_x \cap Q=\emptyset$ and  $V_x$ is a clique.
Hence, since by Algorithm 1 (see line 8) there exists a node 
$u_x \in V_x \cap D$, 
we have that $v$ is a neighbor of $u_x \in D$ in the clique $V_x$.

Now, we prove that $D$ is an optimal $Q$-\stds \ of $H$
whenever $D'$ is an optimal $Q'$-\stds \  of $H'$.
By contradiction, assume that $D$ is not optimal and let $\tD$ be an 
optimal $Q$-\stds \  of $H$.
By Lemma \ref{only-one} we can assume that, for each $x \in [t]$, 
at most one node in $V_x$ is a node in $\tD$. 
Let $\tD' = \{x \ | \ V_x \cap \tD \neq \emptyset\}$.
We claim that $\tD'$ is a $Q'$-\stds \  of $H'$. Indeed:\\
-- \  If  $x \in \tD'$ then there is a node $u \in V_x \cap \tD$.
Since $\tD$ is a $Q$-\stds \  of $H$, we have that there exists 
a node $v \in V_y \cap \tD$  that is a neighbor of $u$ in $H$, 
for some $y \in [t]-\{x\}$.
Hence, $y \in \tD'$ and $y$ is a neighbor of $x$ in $H'$.\\
-- \ If  $x \not\in \tD'$ then $ V_x \cap \tD = \emptyset$.
Hence, any node $u \in V_x$ is dominated by some node $v\in \tD$.
Since $v \in V_y$, for some $y \in [t]-\{x\}$, we have that 
$x$ is a neighbor of $y$ in $H'$; furthermore, $\tD \cap V_y \neq \emptyset$ and so 
$y \in \tD'$. \\
Finally, we prove that $|\tD'| < |D'|$ thus obtaining a contradiction since $D'$ is optimal.\\
By Lemma \ref{only-one} and the construction of $\tD'$, there is a one-to-one
correspondence  between $\tD'$ and $\tD$. Furthermore, by Algorithm 1 
there is a correspondence one-to-one between $D'$ and $D$.
Hence, $|\tD'| = |\tD| < |D| = |D'|$.
\end{proof} 

\begin{theorem}\label{teo-dom}
\Domination$(G,\emptyset)$ returns a minimum dominating set 
in time \com{2^{\dnd(G)}}.
\end{theorem}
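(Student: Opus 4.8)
The plan is to argue correctness by induction on the length $d$ of the type graph sequence of $G$, and then to bound the running time by separating the brute-force cost on the base graph from the polynomial overhead of building and traversing the sequence.

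For correctness, first I would observe that when $G$ is itself a base graph, the algorithm tries every subset $S \subseteq V(G)$, retains the smallest one that is a $\emptyset$-\stds\ of $G$, and that a $\emptyset$-\stds\ is exactly a dominating set; hence the output is a minimum dominating set. This is the induction base. For the inductive step, suppose $G$ is not a base graph; then \Domination\ forms the type graph $H' = \H1$, the lifted set $Q' = \{x \in V(H') \mid V_x \cap \emptyset \neq \emptyset \text{ or } V_x \text{ is an independent set}\}$, and recurses. Since $\H1$ has a strictly shorter type graph sequence, by the induction hypothesis \Domination$(\H1, Q')$ returns an \emph{optimal} $Q'$-\stds\ of $\H1$. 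Then Lemma \ref{recursive-step} applies verbatim (with $H = G$, $Q = \emptyset$) and tells us that the set $D$ reconstructed by picking one representative node $u_x \in V_x$ for each $x \in D'$ is an optimal $\emptyset$-\stds\ of $G$ — i.e., a minimum dominating set. Strictly speaking, the induction should be carried out over the statement ``\Domination$(H,Q)$ returns an optimal $Q$-\stds\ of $H$ for every $H$ in the type graph sequence of $G$ and every $Q \subseteq V(H)$'', since the recursion changes $Q$; Lemma \ref{recursive-step} is exactly the engine of the inductive step, and the base-graph case handles arbitrary $Q$ by brute force as well.

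For the running time, let $d$ be the length of the type graph sequence $\Ho = G, \H1, \ldots, \Hd$, and recall that $|V(\Hd)| = \dnd(G) = t$ by definition. Each call to \Domination\ on a non-base graph $H$ only does polynomial work: computing the type partition of $H$ and the type graph $H'$ is polynomial in $|V(H)| \le n$ (by the algorithm underlying Theorem 1), forming $Q'$ is polynomial, and reconstructing $D$ from $D'$ is polynomial. Since $|V(\H1)| > |V(\Hdue)| > \cdots$ strictly, we have $d \le n$, so the total overhead across the whole sequence is $poly(n)$. The only exponential cost is the single brute-force search on the base graph $\Hd$: it enumerates all $2^{|V(\Hd)|} = 2^{t}$ subsets $S \subseteq V(\Hd)$ and, for each, checks in $poly(t)$ time whether $S$ is a $Q$-\stds. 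This gives total time $O(2^{t} \cdot poly(t) + poly(n)) = O(2^{\dnd(G)} + poly(n))$, matching the \com{2^{\dnd(G)}} claim; this is exactly the instantiation of the general scheme of Section \ref{sec-algo} with $f = O(2^t poly(t))$.

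Two points deserve care, and the second is the main obstacle. First, Lemma \ref{only-one} and Lemma \ref{recursive-step} are stated for \emph{connected} graphs / assume the type-partition structure where distinct metavertices are either fully joined or non-adjacent; one should check that the recursion preserves whatever connectivity hypothesis is needed, or argue that disconnected components can be handled separately (a minimum dominating set is the union of minimum dominating sets of the components, and each component has iterated type partition at most that of $G$). Second — and this is where I expect the real work to lie — one must make sure the running-time accounting is honest: the brute-force step runs on the base graph $\Hd$, whose size is $t$, \emph{not} on any intermediate $\Hii$ of possibly larger size; the algorithm as written only brute-forces when ``$H$ is a base graph'', so this is fine, but one should explicitly note that every non-base call is cheap precisely because it defers the exponential work down the chain to $\Hd$. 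Spelling this out cleanly — that a single $O(2^{t}poly(t))$ leaf cost plus $poly(n)$ overhead yields the stated bound — is the crux, and it is essentially the content of the ``general algorithm scheme'' remark already made before Section \ref{sec-algo}.
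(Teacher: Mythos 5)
Your proposal is correct and follows essentially the same route as the paper: both arguments use Lemma \ref{recursive-step} as the inductive engine down the type graph sequence (the paper phrases the induction as an unrolling over steps $i=d-1,\dots,0$), handle the base graph by brute force over all $2^{\dnd(G)}$ subsets, and charge all non-leaf work to the polynomial overhead. Your explicit remarks on strengthening the induction hypothesis to arbitrary $Q$ and on confining the exponential cost to the single call on $\Hd$ are sound and merely make precise what the paper leaves implicit.
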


\begin{proof}{}
Let $\Ho=G, \H1, \cdots, \Hd$ be the type graph sequence of $G$.
When \Domination$(G,\emptyset)$ is called, Algorithm 1 proceeds recursively, 
and at the $i$-th recursive step, for $i=0,\cdots, d$,  the algorithm is 
called with input graph $\Hii$ and  input node set $Q_i \subseteq V(\Hii)$,
where $Q_i$ is constructed at line 3 of the previous step $i-1$, for $i=1,\cdots,d$, and it is the empty set when $i=0$, i.e., $Q_0=\emptyset$.
 
At step $d$ the algorithm establishes by brute force the optimal $Q_d$-\stds \  of the base graph $\Hd$.

By Lemma \ref{recursive-step}, the set returned at the end of each 
recursive step $i$, for $i=d-1,\cdots, 0$, is the optimal $Q_i$-\stds \ 
 of $\Hii$. Hence, at the end (when $i=0$) the returned set is 
the optimal $\emptyset$-\stds \  of $\Ho$, that by the definition
 is the minimum dominating set of $G$.

Considering that $|V(\Hd)|=\dnd(G)$, the brute search of the solution set at step $d$ requires time $O(2^{\dnd(G)})$. Furthermore, since the construction of the type partition 
of $\Hii$  and of its type graph can be done in polynomial time, and that both 
the construction of $Q_i$ and the selection of the nodes in the solution set 
are easily obtained in linear time, we have \com{2^{\dnd(G)}} time.

\end{proof}

\subsection{Vertex coloring}

In the following we deal with  a generalization of the vertex coloring known as multicoloring.
\begin{definition}
Given a graph $G = (V,E)$ and a weight function $w : V (G) \rightarrow N$, 
the $w$-{\em multicoloring} of $G$  is a function $C$ that assigns 
to each node $v \in V(G)$ a set of $w(v)$ distinct colors such that 
if $(u,v) \in E(G)$ then $C(u) \cap C(v) = \emptyset$. 
 The objective of $w$-multicoloring problem is to
minimize the total number of colors used by the assignment $C$.
\end{definition}
In  case of unitary weights, the    multicoloring problem becomes the   vertex coloring problem.

In the following we  say that a set of colors $C(v)$ assigned to 
a node $v\in V(G)$ is {\em safe} for $v$ if  $C(u) \cap C(v) = \emptyset$ 
whenever $(u,v) \in E(G)$.
\begin{lemma} \label{independent}
Let $G$ be a graph and let  $w : V (G) \rightarrow N$ be a weight function.
Let  $V_1,\cdots, V_t$ be the type partition of $G$.
There exists an optimal $w$-multicoloring $C$ of $G$ such that 
for each independent set $V_x$ in the type partition of $G$
it holds $C(u) \subseteq C(v)$  for $u \in V_x$ and
$v=\argmax_{u \in V_x} |C(u)|$.
\end{lemma}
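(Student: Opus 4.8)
The plan is to start from an arbitrary optimal $w$-multicoloring $C_0$ of $G$ and transform it, one independent metavertex at a time, into a multicoloring $C$ with the stated nesting property, without increasing the number of colors used and keeping it proper. Fix an independent set $V_x$ in the type partition, and let $v = \argmax_{u\in V_x}|C_0(u)|$ (ties broken arbitrarily). The crucial observation is that all nodes of $V_x$ have exactly the same neighborhood outside $V_x$ (this is the definition of a type), and since $V_x$ is an independent set there are no edges inside $V_x$ either; hence a set of colors is safe for one node of $V_x$ if and only if it is safe for every node of $V_x$. In particular $C_0(v)$ is safe for every $u\in V_x$. So I would redefine $C(u)$, for each $u\in V_x$, to be an arbitrary subset of $C_0(v)$ of size $w(u)$; this is possible because $|C_0(v)| = \max_{u\in V_x}|C_0(u)| \ge w(u)$ (a proper multicoloring assigns at least $w(u)$ colors to $u$, and here we may assume it assigns exactly $w(u)$, or simply note $w(v)\le|C_0(v)|$ and $w(u)\le w(v)$ is \emph{not} assumed — so instead pick $|C_0(v)|\ge w(u)$ directly from optimality, see below). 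For all nodes outside $V_x$, leave the coloring unchanged.

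Next I would check that the modified coloring is still proper and uses no new colors. Properness: the only edges whose endpoints changed are edges between $V_x$ and $V(G)\setminus V_x$; for such an edge $(u,w)$ with $u\in V_x$, we have $C(u)\subseteq C_0(v)$ and $C_0(v)\cap C_0(w)=\emptyset$ because $(v,w)\in E(G)$ (same type), and $C(w)=C_0(w)$, so $C(u)\cap C(w)=\emptyset$. No new colors: every color now used on $V_x$ already appeared in $C_0(v)\subseteq\bigcup_u C_0(u)$, and colors outside $V_x$ are untouched; hence the total color set does not grow, so the new coloring is still optimal. After this modification the nesting property $C(u)\subseteq C(v)$ holds for the metavertex $V_x$, with $v=\argmax_{u\in V_x}|C(u)|$ still valid since $|C(v)|$ was the maximum before and we only shrank the others.

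Finally, I would argue that these modifications can be performed for all independent metavertices simultaneously (or iteratively) without interference. Since the modification for $V_x$ only rewrites colors of nodes in $V_x$ using colors drawn from $C_0(v)$ — which were already legal there — and never touches any node outside $V_x$, processing a second independent metavertex $V_{x'}$ (disjoint from $V_x$) cannot disturb the property just established for $V_x$: the colors of $V_{x'}$ change only among themselves, and edges between $V_x$ and $V_{x'}$ remain properly colored by the argument above applied with $w$ ranging over $V_{x'}$. Iterating over all independent metavertices yields the desired optimal $w$-multicoloring $C$.

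The one point that needs a little care — and the only real obstacle — is making sure $w(u)\le|C_0(v)|$ so that a size-$w(u)$ subset of $C_0(v)$ exists. If in $C_0$ every node happens to receive exactly $w(\cdot)$ colors, this is immediate from the choice of $v$ as the argmax of $|C_0(\cdot)|$ over $V_x$ together with $|C_0(v)|\ge w(v)$ only when $v$ itself is the heaviest node. In general one should first observe that from any optimal multicoloring we may assume $|C_0(u)|=w(u)$ for all $u$ (discarding surplus colors never breaks properness and never increases the palette), after which $|C_0(v)|=\max_{u\in V_x}w(u)\ge w(u)$ for every $u\in V_x$, so the required subset always exists. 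With this normalization in place the rest of the argument is routine.
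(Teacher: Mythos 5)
Your proof is correct and follows essentially the same route as the paper's: pick the node $v$ of $V_x$ with the largest color set, observe that $C(v)$ is safe for every node of $V_x$ because they share the same neighborhood and $V_x$ has no internal edges, and shrink every other node's color set to a $w(u)$-sized subset of $C(v)$, leaving cliques and all other metavertices untouched. The normalization issue you flag at the end is in fact vacuous here, since the paper's definition of a $w$-multicoloring already requires $|C(u)|=w(u)$ exactly, so $|C(v)|=\max_{u\in V_x}w(u)\ge w(u)$ holds automatically.
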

		
\begin{proof}{}
Let $V_x$ be any independent set and let $v_x=\argmax_{u \in V_x} |C(u)|$.
Since $v_x$ shares with any other node $u\in V_x$ the neighborhood, we have 
that the set of colors $C(v_x)$ is safe for each node $u \in V_x$. Hence, 
$C(u)\subseteq C(v_x)$ with $|C(u)|=w(u)$ is safe for $u$.
This allows to define a new optimal $w$-multicoloring $C'$ as follows:
$$C'(u)= \begin{cases}
        C(u) & \mbox{if $u \in V_x$ and $V_x$ is a clique}\\
			 C(v_x)  & \mbox{if $u=v_x$ and $V_x$ is an independent set}\\
	\mbox{$C(u)\subseteq C(v_x)$ with $|C(u)|=w(u)$}  & 
	\mbox{if $u\in V_x-\{v_x\}$ and $V_x$ is an independent set.} 
																						\end{cases}$$
\end{proof}

The proposed  FPT algorithm \Coloring\  is given as Algorithm \ref{alg3}. 
Let  $w_u$ be a unitary weight function, that is, $w_u : V (G) \rightarrow \{1\}$.
Initially the algorithm is called with \Coloring($G,w_u$).
The algorithm recursively constructs the graphs in the type graph 
sequence of $G$, until the   base graph is obtained.
At each recursive step, the algorithm has as input a graph $H$ and the weight function $w$  that, for each node $u\in V(H)$, gives  the number 
$w(u)$ of colors that must  be assigned to $u$. 
The weight function $w$ is obtained from the one at the previous step.
In case $H$ is the base graph, then
the algorithm solves the ILP shown at the lines 4-5 and, by following \cite{L},   obtains the minimum $w$-multicoloring $C$ of $H$.
If, otherwise, $H$ is not the base graph then the algorithm 
first constructs the type graph $H'$ of $H$ and opportunely evaluates 
the weight $w'(x)$ to be assigned to each node $x$ in $V(H')$,
then it uses the $w'$-multicoloring
 $C'$ of $H'$  returned by \Coloring($H',w'$)
to build  and return the $w$-multicoloring $C$ of $H$. 
In particular, the algorithm
considers each set $V_x$ in the type partition of $H$ and
distributes the $w'(x)$ colors assigned to $x$ to the nodes in $V_x$
taking account of the fact whether  $V_x$ is a clique or an independent set. 
\begin{algorithm}
\SetCommentSty{footnotesize}
\SetKwInput{KwData}{Input}
\SetKwInput{KwResult}{Output}
\DontPrintSemicolon
\caption{ \ \   \textbf{Algorithm} \Coloring($H, w$) \label{alg3}}
\KwData { A graph $H=(V(H),E(H))$, a weighted function $w: V(H) \rightarrow N_0$ of }
\setcounter{AlgoLine}{0}

\uIf{ $H$ a base graph }{ 
Let $\I$ be the set of all independent sets of nodes in $H$. \\
Solve the following ILP: \\
\qquad \qquad $\min \sum_{I \in \I} z_I$\\
\qquad \qquad  $ \sum_{I : u\in I} z_I =w(u)$ for each $u \in V(H)$\\
\BlankLine
 \For{\textbf{each} $I$ such that $z_I > 0$}{
      Choose $z_I$ new colors and
		 give to each   $u \in I$ a set $C(u)$ of $w(u)$ such colors, 
			}
 }
\uElse{ 
Let $V_1,\cdots, V_t$ be the type partition of $H$ and  $H'$ be the type graph of $H$. \\
 \lFor{\textbf{each} $x \in V(H')$}{$w'(x)= \begin{cases}
                      \sum_{u \in V_x}w(u) & \mbox{if $V_x$ is a clique}\\
											\max_{u \in V_x} w(u) & \mbox{if $V_x$ is an independent set}
											\end{cases}$
											} 
$C'$ = \Coloring$(H',w')$ \\
\BlankLine
 \For{\textbf{each} $x \in V(H')$}{
    \uIf{ $V_x$ is a clique }{ 
		 Let $u_1,u_2,\cdots,u_{|V_x|}$ be the nodes in $V_x$ \\
		 Partition  $C'(x)$ in $|V_x|$ subsets, 
		   $C(u_1), C(u_2), \cdots, C(u_{|V_x|})$, s.t. $|C(u_i)|=w(u_i)$ for $i=1,2,\cdots,|V_x|$\\
					}
						\lElse{ Assign to each node $u \in V_x$ a set $C(u) \subseteq C'(x)$, 
			with $|C(u)|=w(u)$}                  
}
}
 \Return $C$
\end{algorithm}

\remove{
If $V_x$ is a clique then the $w'(x)$ colors are partitioned assigning 
$w(u)$ colors to $u$, for each $u \in V_x$ (recall that  by line 10 it holds 
$w'(x)=\sum_{u \in V_x}w(u)$). If  $V_x$ is an independent set then any subset of
$w(u)$ colors among the available $w'(x)$ is assigned to $u$, for $u \in V_x$ 
(recall that by line 10 it holds $w'(x)=\max_{u \in V_x} w(u)$). 
}
\begin{lemma} \label{recursive-step-coloring}
Let $H$ be not a base graph and let  $w : V (G) \rightarrow N$ be a weight function.
Let $V_1,\cdots, V_t$ be the type partition of $H$ and let $H'$ be its type graph.
If \\
\centerline{$w'(x)= \begin{cases}
      \sum_{u \in V_x}w(u) & \mbox{if $V_x$ is a clique}\\
			\max_{u \in V_x} w(u) & \mbox{if $V_x$ is an independent set}
		\end{cases}$}\\
		and
$C'$ is an optimal $w'$-multicoloring of $H'$ then the coloring $C$ returned 
by \Coloring$(H,w)$ 
is an optimal $w$--multicoloring  of $H$.
\end{lemma}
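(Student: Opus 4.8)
The plan is to follow the same two-phase pattern used for Lemma~\ref{recursive-step}: first show that the function $C$ returned by \Coloring$(H,w)$ is a legal $w$-multicoloring of $H$, and then prove optimality by lifting any strictly better multicoloring of $H$ to a multicoloring of the type graph $H'$ and invoking optimality of $C'$. For legality, I would first note that for every $x\in V(H')$ and every $u\in V_x$ the set $C(u)$ produced at lines~14--18 has exactly $w(u)$ colors and satisfies $C(u)\subseteq C'(x)$; this follows from the definition of $w'$ at line~10 together with $|C'(x)|=w'(x)$, which holds because $C'$ is a $w'$-multicoloring. The only edges of $H$ are those inside a clique metavertex $V_x$ and those between $V_x$ and $V_y$ with $(x,y)\in E(H')$. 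Inside a clique $V_x$ the sets $C(u_1),\dots,C(u_{|V_x|})$ form a partition of $C'(x)$, hence are pairwise disjoint; for $(x,y)\in E(H')$ we have $C'(x)\cap C'(y)=\emptyset$ since $C'$ is proper, and $C(u)\subseteq C'(x)$, $C(v)\subseteq C'(y)$, so $C(u)\cap C(v)=\emptyset$. No constraint arises inside an independent metavertex or between non-adjacent metavertices, so $C$ is a $w$-multicoloring of $H$.

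Next I would record that $C$ and $C'$ use exactly the same set of colors. Every color occurring in $C$ lies in some $C'(x)$; conversely every color of $C'(x)$ is actually used by $C$ --- for a clique metavertex because the $C(u_i)$ partition $C'(x)$, and for an independent metavertex because the node of largest weight receives all of $C'(x)$, using $w'(x)=\max_{u\in V_x}w(u)$. Hence the number of colors used by $C$ equals the number used by $C'$.

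For optimality, suppose toward a contradiction that $C$ is not optimal and let $\tC$ be an optimal $w$-multicoloring of $H$ using strictly fewer colors than $C$. By Lemma~\ref{independent} we may assume $\tC$ has the property that for every independent metavertex $V_x$ one has $\bigcup_{u\in V_x}\tC(u)=\tC(v_x)$ with $v_x=\argmax_{u\in V_x}|\tC(u)|$. Define $\tC'(x)=\bigcup_{u\in V_x}\tC(u)$ for each $x\in V(H')$. Then I would verify that $\tC'$ is a $w'$-multicoloring of $H'$: if $V_x$ is a clique the sets $\tC(u)$, $u\in V_x$, are pairwise disjoint, so $|\tC'(x)|=\sum_{u\in V_x}w(u)=w'(x)$; if $V_x$ is independent then $|\tC'(x)|=|\tC(v_x)|=\max_{u\in V_x}w(u)=w'(x)$ by the property above; and if $(x,y)\in E(H')$ then $\tC(u)\cap\tC(v)=\emptyset$ for all $u\in V_x$, $v\in V_y$, hence $\tC'(x)\cap\tC'(y)=\emptyset$. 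The set of colors used by $\tC'$ is $\bigcup_{x\in V(H')}\bigcup_{u\in V_x}\tC(u)$, i.e.\ exactly the set of colors used by $\tC$, so $\tC'$ uses strictly fewer colors than $C$, and hence than $C'$, contradicting the optimality of $C'$. Therefore $C$ is an optimal $w$-multicoloring of $H$.

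The step I expect to be the crux is the feasibility check for $\tC'$ on the independent metavertices: without first applying Lemma~\ref{independent} to $\tC$, the collapsed color set $\bigcup_{u\in V_x}\tC(u)$ could be strictly larger than $\max_{u\in V_x}w(u)=w'(x)$, so $\tC'$ would not be a valid $w'$-multicoloring and the contradiction with the optimality of $C'$ would collapse. Everything else --- the disjointness bookkeeping for cliques and the edge-by-edge propagation of properness --- is routine.
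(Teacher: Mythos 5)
Your proposal is correct and follows essentially the same route as the paper's own proof: first verifying feasibility of $C$ via the disjointness of $C'(x)$ and $C'(y)$ for $(x,y)\in E(H')$ and the partition/subset assignments within metavertices, then proving optimality by normalizing a hypothetically better $\tilde C$ with Lemma~\ref{independent}, collapsing it to $\tilde C'(x)=\bigcup_{u\in V_x}\tilde C(u)$, and contradicting the optimality of $C'$. Your explicit justification that $C$ and $C'$ use the same color set (via the maximum-weight node of an independent metavertex receiving all of $C'(x)$) is a detail the paper only asserts, and you correctly identify Lemma~\ref{independent} as the crux of the feasibility of $\tilde C'$ on independent metavertices.
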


\begin{proof}{}
We first prove that the coloring $C$ returned by \Coloring$(H,w)$ is a 
$w$-multicoloring  of $H$, then we will prove that it is also optimal.
Let $v$ be any node in $V(H)$.
W.l.o.g. assume that $v \in V_x$, for some $x \in [t]$.
We distinguish two cases according to the fact that $V_x$ is a clique 
or an independent set.\\
- Let $V_x$ be a clique. Since by the hypothesis $w'(x)=\sum_{u \in V_x}w(u)$ 
then the partitioning of the colors in $C'(x)$ in the $|V_x|$ sets 
given at line 15 correctly assigns a set $C(u)$ of $w(u)$ colors to $u \in V_x$.
Now we claim that $C(u)$ is safe for $u$, i.e.,
 $C(u) \cap C(v) =\emptyset$ whenever $(u,v)\in E(H)$.
If $u,v \in V_x$ then the claim follows since $C(u)$ and $C(v)$ are two sets in the partition of $C'(x)$. If $u \in V_x$ and $v \in V_y$, for $y\neq x$, then by the construction of the type graph $H'$ we have $(x,y)\in E(H')$ and, by the hypothesis, $C'(x) \cap C'(y) = \emptyset$. Considering that $C(v) \subseteq C'(y)$, we have the claim also in this case.\\
- Let $V_x$ be an independent set. Let $u \in V_x$. 
Since $w'(x)=\max_{u \in V_x} w(u)$ then the algorithm (at line 16) correctly assigns a set $C(u) \subseteq C'(x)$ of $w(u)$ colors to $u$.  
Since $V_x$ is an independent set we have that 
each neighbor $v$ of $u$ in $H$ is a node of some 
$V_y$ with $y\neq x$ and $(x,y) \in E(H')$. 
As in the above case, it is easy to see that
since $C'(x) \cap C'(y) = \emptyset$ and $C(v) \subseteq C'(y)$ we have 
$C(u) \cap C(v) =\emptyset$; then, $C(u)$ is safe for $u$. 

Now, we prove that $C$ is an optimal $w$-multicoloring of $H$
whenever $C'$ is an optimal $w'$-multicoloring of $H'$.
By contradiction, assume that $C$ is not optimal and let $\tC$ be an 
optimal $w$-multicoloring of $H$, that is 
$|\bigcup_{u\in V(H)}\tC(u)| < |\bigcup_{u\in V(H)}C(u)|$.
We assume that $\tC$ is the optimal multicoloring pinpointed by
 Lemma \ref{independent}.
By using $\tC$ we define a $w'$-multicoloring $\tC'$ of $H'$ as follows:
\begin{equation} \label{new-color}
\tC'(x)=\bigcup_{u \in V_x}\tC(u) \qquad \qquad \mbox{for $x \in V(H')$}.
\end{equation}
We first prove that $|\tC'(x)|=w'(x)$, then we prove that $\tC'(x)$ is safe for $x$. Finally, we prove that the number of colors used by 
$\tC'$ is less than the number of colors used by $C'$, thus obtaining a contradiction. \\
If $V_x$ is a clique then $\tC(u) \cap \tC(v)=\emptyset$ for each pair 
of nodes $u,v \in V_x$. Hence $|\tC'(x)|=\sum_{u \in V_x}|\tC(u)|=
\sum_{u \in V_x}w(u)=w'(x)$. 
If $V_x$ is an independent set then by using Lemma \ref{independent}
we have that if $v=\argmax_{u \in V_x} |\tC(u)|$ then 
$\tC(u) \subseteq \tC(v)$  for $u \in V_x$. Hence, 
$|\tC'(x)|=|\bigcup_{u \in V_x}\tC(u)|=|\tC(v)|=w(v)=\max_{u \in V_x} w(u) =w'(x)$.\\
To prove that $\tC'(x)$ is safe for $x$, consider that
for each $y \in V(H')$ with $(x,y) \in E(H')$ we have that 
$(u,v) \in E(H)$ for each $u \in V_x$ and $v \in V_y$. 
Hence, since $\tC(u) \cap \tC(v)=\emptyset$, we have 
$\tC'(x) \cap \tC'(y)=\emptyset$ and then $\tC'(x)$ is safe for $x$.\\
To complete the proof, we recall that the construction of $C$ by $C'$ in Algorithm \ref{alg3} assures that
  $\bigcup_{u \in V(H)}C(u) = \bigcup_{x\in V(H')}C'(x)$. Hence, by (\ref{new-color}) we have
$$\big|\bigcup_{x \in V(H')}\tC'(x) \big| = \big|\bigcup_{x \in V(H')}\bigcup_{u \in V_x}\tC(u) \big|=
\big|\bigcup_{u \in V(H)}\tC(u) \big| <  \big|\bigcup_{u\in V(H)}C(u)\big| = 
 \big| \bigcup_{x\in V(H')}C'(x) \big|$$

\end{proof}

\remove{
****************************
\proof ({\em Sketch.}) 
We first prove that the coloring $C$ returned by \Coloring$(H,w)$ is a 
$w$-multicoloring  of $H$, then we will prove that it is also optimal.
Let $v$ be any node in $V(H)$.
W.l.o.g. assume that $v \in V_x$, for some $x \in [t]$.
We distinguish two cases according to the fact that $V_x$ is a clique 
or an independent set.\\
- Let $V_x$ be a clique. Since by the hypothesis $w'(x)=\sum_{u \in V_x}w(u)$ 
then the partitioning of the colors in $C'(x)$ in the $|V_x|$ sets 
given at line 15 correctly assigns a set $C(u)$ of $w(u)$ colors to $u \in V_x$.
Now we claim that $C(u)$ is safe for $u$, i.e.,
 $C(u) \cap C(v) =\emptyset$ whenever $(u,v)\in E(H)$.
If $u,v \in V_x$ then the claim follows since $C(u)$ and $C(v)$ are two sets in the partition of $C'(x)$. If $u \in V_x$ and $v \in V_y$, for $y\neq x$, then by the construction of the type graph $H'$ we have $(x,y)\in E(H')$ and, by the hypothesis, $C'(x) \cap C'(y) = \emptyset$. Considering that $C(v) \subseteq C'(y)$, we have the claim also in this case.\\
- Let $V_x$ be an independent set. Let $u \in V_x$. 
Since $w'(x)=\max_{u \in V_x} w(u)$ then the algorithm (at line 16) correctly assigns a set $C(u) \subseteq C'(x)$ of $w(u)$ colors to $u$.  
Since $V_x$ is an independent set we have that 
each neighbor $v$ of $u$ in $H$ is a node of some 
$V_y$ with $y\neq x$ and $(x,y) \in E(H')$. 
As in the above case, it is easy to see that
since $C'(x) \cap C'(y) = \emptyset$ and $C(v) \subseteq C'(y)$ we have 
$C(u) \cap C(v) =\emptyset$; then, $C(u)$ is safe for $u$. 

Now, we prove that $C$ is an optimal $w$-multicoloring of $H$
whenever $C'$ is an optimal $w'$-multicoloring of $H'$.
By contradiction, assume that $C$ is not optimal and let $\tC$ be an 
optimal $w$-multicoloring of $H$, that is 
$|\bigcup_{u\in V(H)}\tC(u)| < |\bigcup_{u\in V(H)}C(u)|$.
We assume that $\tC$ is the optimal multicoloring pinpointed by
 Lemma \ref{independent}.
By using $\tC$ we define a $w'$-multicoloring $\tC'$ of $H'$ as follows:
$\tC'(x)=\bigcup_{u \in V_x}\tC(u)$  \ for $x \in V(H')$.
We can prove that $|\tC'(x)|=w'(x)$ and that $\tC'(x)$ is safe for $x$.
 Finally, we can also prove that the number of colors used by 
$\tC'$ is less than the number of colors used by $C'$, thus obtaining a contradiction. 
\qed
*****************
}

\begin{theorem} \label{teo-coloring}
\Coloring$(G,w_u)$ returns a minimum coloring of $G$ in time {\com{t^{2,5 t + o(t)}\log n}}, where $t=\dnd(G)$.
\end{theorem}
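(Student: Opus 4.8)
The plan is to establish correctness by a bottom-up induction along the type graph sequence $\Ho=G,\H1,\dots,\Hd$ of $G$, and then to bound the running time through the general three-step scheme of Section~\ref{sec-algo}: all the work performed outside the base graph is polynomial, so only the cost of solving the base instance has to be charged.

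\emph{Correctness.} Unrolling the recursion, the $i$-th call is \Coloring$(\Hii,w^{(i)})$, where $w^{(0)}=w_u$ is the unit weight and, for $i\ge 1$, $w^{(i)}$ is the weight function computed from $w^{(i-1)}$ at line~10 (sum of the weights over a clique class, maximum over an independent class). At the deepest level $\Hd$ is a base graph, so each of its type classes is a singleton, hence a clique by convention, and a proper coloring of $\Hd$ assigning $w^{(d)}(u)$ colors to each vertex $u$ is exactly a $w^{(d)}$-multicoloring. The integer program at lines~4--5 faithfully models minimum multicoloring of $\Hd$: given any $w^{(d)}$-multicoloring $C$, mapping each used color $c$ to its class $\{u : c\in C(u)\}$ --- an independent set of $\Hd$ since $C$ is proper --- and letting $z_I$ count the colors whose class is $I$ yields a feasible point whose objective equals the number of colors used; conversely, every feasible $(z_I)$ produces a $w^{(d)}$-multicoloring with that many colors, as in lines~6--8. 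Hence, following \cite{L}, line~5 returns an optimal $w^{(d)}$-multicoloring $C^{(d)}$ of $\Hd$. Now Lemma~\ref{recursive-step-coloring} applies at every non-base level: if the recursive call returns an optimal $w^{(i)}$-multicoloring of $\Hii$, then \Coloring$(\Hi1,w^{(i-1)})$ returns an optimal $w^{(i-1)}$-multicoloring of $\Hi1$. Chaining this from $i=d$ down to $i=1$, the returned coloring $C^{(0)}$ is an optimal $w_u$-multicoloring of $G$, that is, a minimum proper vertex coloring of $G$. (Lemma~\ref{independent} enters only inside the proof of Lemma~\ref{recursive-step-coloring}, so no separate appeal is needed here.)

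\emph{Running time.} The type graph sequence is produced in polynomial time: each type partition and each type graph is computed in polynomial time, and since a non-base graph has two vertices of the same type we get $|V(\Hii)|<|V(\Hi1)|$, so the sequence has at most $n$ terms. At every non-base level the post-processing of lines~9--16 --- evaluating $w'$ and splitting each color set $C'(x)$ among the vertices of $V_x$ --- costs time linear in $|V(\Hii)|$; hence all the work outside the base case is $poly(n)$. By the scheme of Section~\ref{sec-algo} it then remains to bound the time $f$ spent on the base graph $\Hd$, which has $|V(\Hd)|=\dnd(G)=t$ vertices. Here one uses that the total weight cannot grow along the sequence, $\sum_u w^{(i)}(u)\le\sum_u w^{(i-1)}(u)\le n$ (a maximum is at most a sum), so each $w^{(d)}(u)\le n$ and the minimum number of colors is at most $\sum_u w^{(d)}(u)\le n$. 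Turning the optimization of lines~4--5 into $O(\log n)$ feasibility queries by binary search on the number of colors, and applying Proposition~\ref{prop} to the resulting integer programs --- whose coefficients are $0/1$ and whose data has bit-length $poly(t)\log n$ --- bounds $f$ by $O(t^{2.5t+o(t)}\log n)$. Altogether the algorithm runs in \com{t^{2.5t+o(t)}\log n} time.

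The delicate step is the base case: besides verifying that lines~4--5 capture minimum multicoloring of $\Hd$, one must ensure that this integer program is in a form for which Proposition~\ref{prop} yields a $t^{O(t)}$, rather than a larger, dependence on $t$ --- this is exactly where the argument parallels the neighborhood-diversity coloring routine of \cite{L}. Everything else is a direct appeal to Lemma~\ref{recursive-step-coloring} or routine bookkeeping.
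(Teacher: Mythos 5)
Your correctness argument is the same as the paper's: unroll the recursion, show the ILP at lines 4--5 exactly models minimum $w^{(d)}$-multicoloring of the base graph, and chain Lemma~\ref{recursive-step-coloring} back up the type graph sequence. That part is fine, as is the observation that the total weight is non-increasing along the sequence (so all weights and the optimum are bounded by $n$, which controls the bit-length of the ILP data).

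The gap is in the base-case running time, and you essentially name it yourself without closing it. The ILP at lines 4--5 has one variable $z_I$ per independent set $I\in\I$ of $\Hd$, i.e.\ up to $2^t$ variables, not $t$. Applying Proposition~\ref{prop} (or its optimization variant) directly to such a program, whether or not you first binary-search on the number of colors, gives a bound of the form $(2^t)^{2.5\cdot 2^t+o(2^t)}$, which is doubly exponential in $t$ and nowhere near \com{t^{2.5t+o(t)}\log n}. Saying that ``one must ensure that this integer program is in a form for which Proposition~\ref{prop} yields a $t^{O(t)}$ dependence'' is precisely the missing step, not a verification of it. The paper closes this by invoking Lenstra's observation (Section~4 of \cite{Len}) that an ILP whose number of (equality) constraints is $t$ can be replaced by an equivalent ILP with only $\min\{t,2^t\}=t$ variables, and then applies the optimization version of the $t$-variable ILP theorem of \cite{FLMRS}, whose $\log(MN)$ term supplies the $\log n$ factor. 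Your binary search over the number of colors is a legitimate alternative to using the optimization version (it is what \cite{L} does), but it is orthogonal to the real issue: without the reduction in the number of variables from $2^t$ to $t$, the claimed time bound does not follow.
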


\begin{proof}{}
Let $\Ho=G, \H1, \cdots, \Hd$ be the type graph sequence of $G$.
When \Coloring$(G,w_u)$ is called, Algorithm \ref{alg3} proceeds recursively, 
and at the $i$-th recursive step, for $i=0,\cdots, d$,  the algorithm is 
called with input graph $\Hii$ and  input weighted function $w_i$,
where $w_i$ is constructed at line 10 of the previous step $i-1$, for $i=1,\cdots,d$, and it is the unitary weighted function when $i=0$, i.e., $w_0=w_u$.
 
At step $d$ the algorithm solves an ILP that generalizes the ILP introduced by  Lampis in \cite{L} to obtain an FPT algorithm for proper coloring 
the nodes of a graph. 
Indeed, considering that to guarantee the safety of a multicoloring, 
each color class consists of an independent set of nodes in $\Hd$, 
the ILP at lines 4-5 uses the set $\I$ of all the independent sets of 
nodes in $\Hd$ and determines the number $z_I$, for $I \in \I$, 
of colors to be assigned to the nodes in $I$.
The target is to minimize the total number of used colors, i.e. $\sum_{I \in \I} z_I$, subject to the following constraints: For each node $u \in V(\Hd)$, the sum of the number of colors $z_I$ assigned to each independent set $I$ who $u$ belongs to is exactly equal to the number of colors that $u$ needs, i.e., $w_d(u)$.
Hence, the assignment $C(u)$  to node $u \in I$ (see line 7) of $w_d(u)$ 
colors chosen among the $z_I$ colors assigned to $I$,  
is an optimal $w_d$-multicoloring of the base graph $\Hd$.

By Lemma \ref{recursive-step-coloring}, the multicoloring returned 
at the end of each recursive step $i$, for $i=d-1,\cdots, 0$, is the 
optimal $w_i$-multicoloring of $\Hii$. Hence, at the end (when $i=0$) the returned multicoloring is the optimal $w_u$-multicoloring of $\Ho$, that by the definition is the minimum coloring of $G$.

{ To evaluate the time of our algorithm we use 
the well-known result that Integer Linear Programming is fixed parameter 
tractable parameterized by the number of variables.
\begin{itemize}
	\item[] \textbf{$t$-Variable Opt Integer Linear Programming}
\\
\textbf{Instance:}  A matrix $A \in Z^{m\times t}$ and vector $b \in Z^m$ and  $c \in Z^t$.
\\
\textbf{Question:} Find a vector $x \in Z^t$ that minimize $c^\top x$ and satisfies $Ax \geq b$?
\end{itemize}
\begin{theorem} \label{tprop}\cite{FLMRS}
$t$-Variable Opt Integer Linear Programming can be solved in time $O(t^{2.5t+o(t)} \cdot L \cdot \log(MN))$ where $L$ is the number of bits in the input  $N$ is
the maximum absolute values any variable can take, and $M$ is an upper bound on the absolute value of the minimum taken by the objective function.
\end{theorem}
}

{Since $|V(\Hd)|=\dnd(G)$, the ILP at lines 4-5 uses $2^{\dnd(G)}$ 
variables and $\dnd(G)$ constraints. As highlighted by Lenstra (see section 4 in \cite{Len}), such ILP can 
be reduced to an ILP with only $\min\{\dnd(G),2^{\dnd(G)}\} = \dnd(G)$ variables.
By Proposition \ref{tprop} we have that it can be solved in time $O(t^{2.5t+o(t)} \log n)$ where $t=\dnd(G)$.
Furthermore, since the construction of the type partition 
of $\Hii$  and of its type graph can be done in polynomial time, and that both 
the construction of $w_i$ and the selection of the colors for each node 
$u \in V(\Hii)$ 
are easily obtained in linear time, we have \com{t^{2.5t+o(t)} \log n} time.}
\end{proof}

{  An algorithm parameterized by modular-width, which obtains the 
minimum number of colors to color properly a graph $G$ was presented in \cite{GLO}}. We stress that,  {a part simplicity and efficiency questions,}  such an algorithm does not provide the coloring of the vertices.

\def\VC{{{\sf Vertex Cover}}}

\subsection{Vertex cover}
\begin{algorithm}[tb!]
\SetCommentSty{footnotesize}
\SetKwInput{KwData}{Input}
\SetKwInput{KwResult}{Output}
\DontPrintSemicolon
\caption{ \ \    \VC($H, w,s$) \label{algvc}}
\KwData { A graph $H=(V(H),E(H))$, 	two weighted functions $w: V(H) \rightarrow N_0$, $s: V(H) \rightarrow N_0$  }
\setcounter{AlgoLine}{0}
\uIf{ $H$ is a base graph }{ 
$C=V(H)$\\
\For{\textbf{each} $S\subseteq V(H)$}{ 
	\lIf{ ($S$ is a vertex cover of $H$) and ($\sum_{v\in S} s(v)+\sum_{v \notin S} w (v) <\sum_{v\in C} s(v)+\sum_{v \notin C} w (v)$)}{
			$C=S$
		}
	}  
}
\uElse{ 
Let $V_1,\cdots, V_t$ be the type partition of $H$and $H'$  the type graph of $H$. \\
\lFor{ $x \in V(H')$}{$w'(x)= \begin{cases}
                     \min_{v\in V_x}\left( w(v) + \sum_{u \in V_x\atop u\neq v} s(u)\right) & \mbox{if $V_x$ is a clique}\\
											\sum_{u \in V_x} w(u) & \mbox{otherwise}
											\end{cases}$
											} 								
 \lFor{ $x \in V(H')$}{$s'(x)= \sum_{u \in V_x} s(u)$} 								
$C'$ = \VC$(H',w',s')$ \\
$C=\emptyset$ \\
 \For{\textbf{each} $x \in V(H')$}{
            \lIf{$x \in C'$}{ $C=C\cup V_x$}
						\lElseIf{ $V_x$ is a clique}{ \footnotesize{$v_x = \argmax_{u \in V_x} (s(u)-w(u))$;
						   $C=C \cup (V_x-\{v_x\})$}}
					}
} 
\Return $C$
\end{algorithm}

We consider  the following generalization of the weighted vertex cover.
\begin{definition}
Given a graph $G=(V,E)$ and two weight functions  $w:V \rightarrow N$ and  $s:V \rightarrow N$ s.t.   $w(v)\leq s(v)$  for each $v \in V$,
the {\em 2-Weighted Vertex Cover (2-WVC)} of $G$ respect to $s(\cdot)$ 
and $w(\cdot)$ is a set $C \subseteq V$  s.t.
$C$ is a vertex cover  for $G$, which minimizes the value
 $Cost(C)=\sum_{v\in C} s(v)+\sum_{v \notin C} w (v)$.
\end{definition}
When $w(v)=0$ and $s(v)=1$
for each $v \in V$,  a 2-WVC of $G$  is a vertex cover of $G$.  

Algorithm \ref{algvc} shows the FPT algorithm \VC. 
The algorithm recursively constructs graphs in the type graph
sequence of $G$, until the base graph is obtained.
It is initially called with \VC($G,w,s$), where for each $v \in V$ we have $w(v)=0$ and $s(v)=1$.
Intuitively the function $s(\cdot)$ recursively counts the number of nodes  of $G$ that are represented by  a metavertex, while the function $w(\cdot)$ computes the minimum number of nodes of $G$ needed to cover the internal edges of a metavertex.
At each recursive step, the algorithm takes as input a graph $H$ and the two functions $s(\cdot)$ and  $w(\cdot)$ computed in the previous step.
The goal of the algorithm is to compute for each  $H$ in the type graph sequence, a subset $C\subseteq V(H)$ of nodes that is
 a $2$-WVC of $H$.
 Hence, in order to show that the algorithm is correct, we need  to prove that given $C' \subseteq V(H')$ that is a $2$-WVC of $H'$, 
where $H'$ is the type graph of $H$, the solution $C\subseteq V(H)$ -- computed by the algorithm for $H$  --  is a $2$-WVC of $H$. {The result will follow since, in the initial graph $G$,  for each $v \in V$ we have $w(v)=0$ and $s(v)=1$ and consequently  the $2$-WVC problem corresponds to the minimum vertex cover problem.}  
\begin{lemma} \label{lemmaVC}
Let $(H,w,s)$ be an instance of the 2-WVC problem, where $H$ is not a base graph. Let $H'$ be the type graph of $H$ and let $w'$ and $s'$ be the weight functions of $V(H')$ computed by Algorithm \ref{algvc}. If $C'\subseteq V(H')$ is an optimal solution for  $(H',w',s')$, then the solution $C\subseteq V$ computed by  the Algorithm \ref{algvc} is an optimal solution for  2-WVC  of $(H,w,s)$. 
\end {lemma}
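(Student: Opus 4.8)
The plan is to split the argument into three parts: feasibility of the reconstructed set $C$, a cost identity relating $C$ and $C'$, and a transfer of optimality from $C'$ to $C$. For feasibility, recall that every edge of $H$ is either internal to a clique metavertex $V_x$ or joins $V_x$ and $V_y$ with $(x,y)\in E(H')$. In the second case $C'$, being a vertex cover of $H'$, contains $x$ or $y$, so $C$ contains all of $V_x$ or all of $V_y$ and the edge is covered. An internal clique edge is covered because either $x\in C'$ and $V_x\subseteq C$, or $x\notin C'$ and $C$ contains $V_x-\{v_x\}$, which still meets every edge of the clique (each has an endpoint different from $v_x$). Hence $C$ is a vertex cover of $H$.

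Next I would establish that the $(w,s)$-cost of $C$ equals the $(w',s')$-cost of $C'$ by summing contributions metavertex by metavertex. If $x\in C'$ then $V_x\subseteq C$ contributes $\sum_{u\in V_x}s(u)=s'(x)$. If $x\notin C'$ and $V_x$ is independent then $V_x\cap C=\emptyset$ contributes $\sum_{u\in V_x}w(u)=w'(x)$. If $x\notin C'$ and $V_x$ is a clique then the contribution is $w(v_x)+\sum_{u\in V_x,\,u\neq v_x}s(u)=s'(x)-\big(s(v_x)-w(v_x)\big)$, and since $v_x=\argmax_{u\in V_x}(s(u)-w(u))$ this equals $\min_{v\in V_x}\big(w(v)+\sum_{u\neq v}s(u)\big)=w'(x)$. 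Summing, $Cost(C)=\sum_{x\in C'}s'(x)+\sum_{x\notin C'}w'(x)$, i.e., the $(w',s')$-cost of $C'$.

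For optimality, let $\tC$ be an optimal 2-WVC of $(H,w,s)$. The core step is a normalization argument, in the spirit of Lemma \ref{only-one}: using that all nodes of a metavertex share the same external neighbourhood, $\tC$ can be assumed to satisfy, for each metavertex $V_x$, either $V_x\subseteq\tC$, or $V_x$ is a clique with $V_x\setminus\tC=\{v_x\}$ where $v_x=\argmax_{u\in V_x}(s(u)-w(u))$, or $V_x$ is independent with $V_x\cap\tC=\emptyset$. Indeed, a clique leaves at most one vertex outside any vertex cover, and replacing that vertex by the $\argmax$ one keeps feasibility (the two share external neighbours) and changes the cost by $(s(a)-w(a))-(s(b)-w(b))\le 0$ with $b$ the $\argmax$; if an independent $V_x$ has a vertex outside $\tC$, its shared neighbourhood forces all external neighbours of $V_x$ into $\tC$, so all of $V_x$ can be removed from $\tC$ at no extra cost, using $w\le s$. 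Once $\tC$ is in this form, set $\tC'=\{x\in V(H'):V_x\subseteq\tC\}$; the shared-neighbourhood property shows $\tC'$ is a vertex cover of $H'$, and the accounting above applied to $\tC$ shows that the $(w',s')$-cost of $\tC'$ equals the $(w,s)$-cost of $\tC$. Since that is at most the $(w,s)$-cost of $C$, which equals the $(w',s')$-cost of $C'$, and $C'$ is optimal, all these quantities coincide, so $C$ is an optimal 2-WVC of $(H,w,s)$.

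The main obstacle I anticipate is the normalization step: one must argue, uniformly over all metavertices at once, that restricting to this ``block'' structure (one carefully chosen vertex omitted from each non-covered clique, each non-covered independent metavertex taken entirely outside $\tC$) preserves both feasibility and optimality. Once that is in place, the remaining work is routine bookkeeping against the definitions of $w'$ and $s'$.
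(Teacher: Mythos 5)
Your proposal is correct, and its skeleton matches the paper's: both prove that $C$ is a vertex cover of $H$, both establish the identity $Cost(C)=\sum_{x\in C'}s'(x)+\sum_{x\notin C'}w'(x)$ by metavertex-by-metavertex accounting against lines 7--8 of Algorithm \ref{algvc}, and both transfer optimality by projecting a cover of $H$ down to the set $\{x : V_x\subseteq \tC\}$ in $H'$. The one genuine divergence is in how that projection is costed. You first normalize the optimal cover $\tC$ into ``block form'' (each clique metavertex missing exactly the $\argmax_{u}(s(u)-w(u))$ vertex if not fully contained, each uncovered independent metavertex fully outside $\tC$), which buys you an exact cost equality between $\tC$ and its projection, at the price of an exchange argument in the spirit of Lemma \ref{only-one} that you rightly flag as the delicate step. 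The paper skips the normalization entirely: it projects an arbitrary better cover $C^*$ and bounds $Cost(C'')\le Cost(C^*)$ directly, using only that $w\le s$ (for independent metavertices) and that a clique metavertex can miss at most one vertex of any cover while $v_x$ maximizes $s(u)-w(u)$ (for clique metavertices) --- these are inequalities (\ref{ind}) and (\ref{cli}). So the obstacle you anticipate is avoidable: the two per-metavertex inequalities you use to justify the normalization can instead be applied termwise to the unnormalized projection, which yields a slightly shorter proof; your version, in exchange, isolates a reusable structural statement about optimal 2-WVC solutions analogous to Lemma \ref{only-one}.
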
 

\begin{proof}
Let $K',I'$ be a partition of $V(H')$ such that for each $x \in K'$ it holds $V_x$ is a clique, and for each $x \in I'$ it holds $V_x$ is an independent set. 
It is worth to observe that, 
by construction (see lines 12-13), if $C'$ is a vertex cover of $H'$ 
then $C$ is a vertex cover of $H$ and that
\begin{eqnarray*}
Cost(C) & = & \sum_{v\in C} s(v)+\sum_{v \notin C} w (v) \\
 & = & \sum_{x\in C'} \sum_{v\in V_x} s(v) + 
        \sum_{x \in K'-C'} \sum_{v\in V_x-\{v_x\}} s(v) +
				 \sum_{x \in I'-C'}\sum_{v\in V_x} w(v)  
				   \qquad \mbox{(by lines 12-13)}\\
 & = & \sum_{x\in C'} s'(x) + 
       \sum_{x \in K'-C'}	\left( \sum_{v\in V_x} s(v)  - s(v_x)  
					\right) + \sum_{x \in I'-C'} w' (v) 
			      \qquad \mbox{(by lines 7-8)}\\		
 & = & \sum_{x\in C'} s'(x) + 
       \sum_{x \in K'-C'}	\left( \sum_{v\in V_x} s(v)  
			    - \max_{u\in V_x}(s(u)-w(u)) \right)   + \sum_{x \in I'-C'} w' (v) 
			      \qquad \mbox{(by line 13)}\\
 & = & \sum_{x\in C'} s'(x) + 
       \sum_{x \in K'-C'}w'(x)   + \sum_{x \in I'-C'} w' (v) 
			      \qquad \mbox{(by line 7)}\\												
					 & = & \sum_{v\in C'} s(v)+\sum_{v \notin C'} w (v) = Cost(C').
\end{eqnarray*}

By contradiction, let $C'\subseteq V(H')$ be an optimal solution for $(H',w',s')$ while the solution $C$ computed by Algorithm \ref{alg3} 
(lines 11-13) is not optimal.
Then there exists $C^*\subseteq V(H)$ such that $C^*$ is a vertex cover of $H$ and $Cost(C^*)<Cost(C)=Cost(C').$
\\
Let  
\begin{equation*}C''= \{x \ |\ x\in V(H') \mbox{ and } V_x\subseteq C^*\}.
\end{equation*}
Now, we first prove that $C''$ is a vertex cover of $H'$, then we show that
 $Cost(C'')<Cost(C')$  contradicting the optimality of the vertex cover $C'$.\\
Assume that $C''$ is not a vertex cover. Hence, 
there exists $(x,y) \in E(H')$ such that $x\not \in C''$ and $y\not \in C''$, and, by the definition of $C''$, we have that there exist 
$u \in V_x-C^*$ and $v \in V_y-C^*$. Furthermore, since  $(x,y) \in E(H')$ then there is a complete bipartite graph between $V_x$ and $V_y$ in $H$; hence, $(u,v) \in E(H)$. Hence, $u,v \not \in C^*$,  and 
 $(u,v) \in E(H)$ contradicting the assumption that  $C^*$ is a vertex cover of $H$.
\\
We now  prove that $Cost(C'')\leq Cost(C^*)$, from which it follows that
\[Cost(C'')\leq Cost(C^*)<Cost(C)=Cost(C'),\] 
thus concluding the proof.
\begin{eqnarray} \label{eqdiff}
\nonumber \lefteqn{Cost(C^*)-Cost(C'')=\sum_{v\in C^*} s(v) +\sum_{v\notin C^*} w(v) - \sum_{x\in C''} s'(x) -\sum_{x\notin C''} w'(x) } \\
\nonumber 
&=& \left(\sum_{x\in C''}\sum_{v\in V_x } s(v)+\sum_{x\notin C''}\sum_{v\in V_x \cap C^*} s(v)\right)+ 
\sum_{x\notin C''}\sum_{v\in V_x-C^*} w(v) - \sum_{x\in C''} s'(x) -\sum_{x\notin C''} w'(x)\\
\nonumber 
&=& \left(\sum_{x\in C''}s'(x)+\sum_{x\notin C''}\sum_{v\in V_x \cap C^*} s(v)\right)+ 
\sum_{x\notin C''}\sum_{v\in V_x-C^*} w(v) - \sum_{x\in C''} s'(x) -\sum_{x\notin C''} w'(x) \ \ \ \mbox{(by line 9) }\\ \nonumber
&=& \sum_{x\notin C''}\sum_{v\in V_x \cap C^*} s(v)+ 
\sum_{x\notin C''}\sum_{v\in V_x-C^*} w(v)  -\sum_{x\notin C''} w'(x) \\ \nonumber
&=& \sum_{x\notin C''}\left (\sum_{v\in V_x \cap C^*} s(v)+ 
\sum_{v\in V_x-C^*} w(v)  -w'(x) 
\right )\\ \label{aa}
&=& \sum_{x\in I'-C''}\left (\sum_{v\in V_x \cap C^*} s(v)+ 
\sum_{v\in V_x-C^*} w(v)  - \sum_{v \in V_x} w(v)
\right )\\ \nonumber
& & + \sum_{x\in K'-C''}\left (\sum_{v\in V_x \cap C^*} s(v)+ 
\sum_{v\in V_x-C^*} w(v)  - \sum_{v \in V_x} s(v) +(s(v_x)-w(v_x)) 
\right ).
\end{eqnarray}

\noindent
Since by construction $s(v) \geq w(v)$, it holds
\begin{equation} \label{ind}
\sum_{v\in V_x \cap C^*} s(v)+ 
\sum_{v\in V_x-C^*} w(v)  - \sum_{v \in V_x} w(v) \geq 0.
\end{equation}
Recalling that $C^*$ is a vertex cover of $H$, we have that  
for each clique metavertex $V_x$, $C^*$ must contain at least $|V_x|-1$ of its nodes (otherwise $H$ contains at least a non covered edge). 
Knowing that   $s(v_x) \geq s(v)$ (cfr.  line 14 in Algorithm 2), it follows 
\begin{equation} \label{cli}
\sum_{v\in V_x \cap C^*} s(v) -  (\sum_{v \in V_x} s(v) -s(v_x)) \geq 0.
\end{equation}
By (\ref{ind}) and (\ref{cli}) and recalling (\ref{aa}) we have 
$Cost(C^*)-Cost(C'') \geq 0$.
\end{proof}

\remove{
**************

\proof ({\em Sketch.})
Let $K',I'$ be a partition of $V(H')$ such that for each $x \in K'$ it holds $V_x$ is a clique, and for each $x \in I'$ it holds $V_x$ is an independent set. 
It is worth to observe that, 
by construction (see lines 12-13), if $C'$ is a vertex cover of $H'$ 
then $C$ is a vertex cover of $H$ and that $Cost(C)=Cost(C')$.
\remove{
\begin{eqnarray*}
Cost(C) & = & \sum_{v\in C} s(v)+\sum_{v \notin C} w (v) \\
 & = & \sum_{x\in C'} \sum_{v\in V_x} s(v) + 
        \sum_{x \in K'-C'} \sum_{v\in V_x-\{v_x\}} s(v) +
				 \sum_{x \in I'-C'}\sum_{v\in V_x} w(v)  
				   \qquad \mbox{(by lines 12-13)}\\
 & = & \sum_{x\in C'} s'(x) + 
       \sum_{x \in K'-C'}	\left( \sum_{v\in V_x} s(v)  - s(v_x)  
					\right) + \sum_{x \in I'-C'} w' (v) 
			      \qquad \mbox{(by lines 7-8)}\\		
 & = & \sum_{x\in C'} s'(x) + 
       \sum_{x \in K'-C'}	\left( \sum_{v\in V_x} s(v)  
			    - \max_{u\in V_x}(s(u)-w(u)) \right)   + \sum_{x \in I'-C'} w' (v) 
			      \qquad \mbox{(by line 13)}\\
 & = & \sum_{x\in C'} s'(x) + 
       \sum_{x \in K'-C'}w'(x)   + \sum_{x \in I'-C'} w' (v) 
			      \qquad \mbox{(by line 7)}\\												
					 & = & \sum_{v\in C'} s(v)+\sum_{v \notin C'} w (v) = Cost(C').
\end{eqnarray*}
}

By contradiction, let $C'\subseteq V(H')$ be an optimal solution for $(H',w',s')$ while the solution $C$ computed by Algorithm \ref{alg3} 
(lines 11-13) is not optimal.
Then there exists $C^*\subseteq V(H)$ such that $C^*$ is a vertex cover of $H$ and $Cost(C^*)<Cost(C)=Cost(C').$
Let  
\\
\centerline{$C''= \{x \ |\ x \in V(H') \mbox{ and } V_x\subseteq C^*\}.$}
\\
We first prove that $C''$ is a vertex cover of $H'$, then we show that
 $Cost(C'')<Cost(C')$  contradicting the optimality of the vertex cover $C'$.\\
Assume that $C''$ is not a vertex cover. Then, 
there exists $(x,y) \in E(H')$ such that $x\not \in C''$ and $y\not \in C''$, and, by the definition of $C''$, we have that there exist 
$u \in V_x-C^*$ and $v \in V_y-C^*$. Furthermore, since  $(x,y) \in E(H')$ then there is a complete bipartite graph between $V_x$ and $V_y$ in $H$; and  $(u,v) \in E(H)$. Hence, $u,v \not \in C^*$,  and 
 $(u,v) \in E(H)$ contradicting the assumption that  $C^*$ is a vertex cover of $H$.
\\
Since we can  prove that $Cost(C'')\leq Cost(C^*)$, we have \\ 
\hspace*{3truecm}$  Cost(C'')\leq Cost(C^*)<Cost(C)=Cost(C'),$\\ 
thus concluding the proof.
\remove{
\begin{eqnarray} \label{eqdiff}
\nonumber \lefteqn{Cost(C^*)-Cost(C'')=\sum_{v\in C^*} s(v) +\sum_{v\notin C^*} w(v) - \sum_{x\in C''} s'(x) -\sum_{x\notin C''} w'(x) } \\
\nonumber 
&=& \left(\sum_{x\in C''}\sum_{v\in V_x } s(v)+\sum_{x\notin C''}\sum_{v\in V_x \cap C^*} s(v)\right)+ 
\sum_{x\notin C''}\sum_{v\in V_x-C^*} w(v) - \sum_{x\in C''} s'(x) -\sum_{x\notin C''} w'(x)\\
\nonumber 
&=& \left(\sum_{x\in C''}s'(x)+\sum_{x\notin C''}\sum_{v\in V_x \cap C^*} s(v)\right)+ 
\sum_{x\notin C''}\sum_{v\in V_x-C^*} w(v) - \sum_{x\in C''} s'(x) -\sum_{x\notin C''} w'(x) \ \ \ \mbox{(by line 9) }\\ \nonumber
&=& \sum_{x\notin C''}\sum_{v\in V_x \cap C^*} s(v)+ 
\sum_{x\notin C''}\sum_{v\in V_x-C^*} w(v)  -\sum_{x\notin C''} w'(x) \\ \nonumber
&=& \sum_{x\notin C''}\left (\sum_{v\in V_x \cap C^*} s(v)+ 
\sum_{v\in V_x-C^*} w(v)  -w'(x) 
\right )\\ \label{aa}
&=& \sum_{x\in I'-C''}\left (\sum_{v\in V_x \cap C^*} s(v)+ 
\sum_{v\in V_x-C^*} w(v)  - \sum_{v \in V_x} w(v)
\right )\\ \nonumber
& & + \sum_{x\in K'-C''}\left (\sum_{v\in V_x \cap C^*} s(v)+ 
\sum_{v\in V_x-C^*} w(v)  - \sum_{v \in V_x} s(v) +(s(v_x)-w(v_x)) 
\right ).
\end{eqnarray}

\noindent
Since by construction $s(v) \geq w(v)$, it holds
\begin{equation} \label{ind}
\sum_{v\in V_x \cap C^*} s(v)+ 
\sum_{v\in V_x-C^*} w(v)  - \sum_{v \in V_x} w(v) \geq 0.
\end{equation}
Recalling that $C^*$ is a vertex cover of $H$, we have that  
for each clique metavertex $V_x$, $C^*$ must contain at least $|V_x|-1$ of its nodes (otherwise $H$ contains at least a non covered edge). 
Knowing that   $s(v_x) \geq s(v)$ (cfr.  line 14 in Algorithm 2), it follows 
\begin{equation} \label{cli}
\sum_{v\in V_x \cap C^*} s(v) -  (\sum_{v \in V_x} s(v) -s(v_x)) \geq 0.
\end{equation}
By (\ref{ind}) and (\ref{cli}) and recalling (\ref{aa}) we have 
$Cost(C^*)-Cost(C'') \geq 0$.
}
\qed
***********************
}

\begin{theorem}\label{teoVC}
\VC$(G,w,s)$ where for each $v\in V(G),$ $w(v)=0$ and $s(v)=1$ returns the minimum vertex cover of $G$ in time \com{2^{\dnd(G)}}.
\end{theorem}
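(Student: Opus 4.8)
The plan is to follow the same pattern used for Theorem~\ref{teo-dom}. Let $\Ho=G,\H1,\cdots,\Hd$ be the type graph sequence of $G$. When \VC$(G,w,s)$ is called with $w(v)=0$ and $s(v)=1$ for every $v\in V(G)$, Algorithm~\ref{algvc} unfolds recursively, so that its $i$-th recursive invocation receives the graph $\Hii$ together with a pair of weight functions $w_i,s_i$, where $w_0,s_0$ are the functions identically equal to $0$ and to $1$ and, for $i=1,\dots,d$, the pair $(w_i,s_i)$ is the pair $(w',s')$ produced at lines~7--9 of the invocation on $\Hi1$. First I would record that every $(\Hii,w_i,s_i)$ is a legitimate instance of $2$-WVC, i.e. that $w_i(x)\le s_i(x)$ for all $x$: this is immediate for $i=0$ and is preserved by lines~7--9, since for a clique metavertex $V_x$ one has $w_{i+1}(x)=\min_{v\in V_x}\big(w_i(v)+\sum_{u\in V_x\setminus\{v\}}s_i(u)\big)\le\sum_{u\in V_x}s_i(u)=s_{i+1}(x)$, while for an independent metavertex $w_{i+1}(x)=\sum_{u\in V_x}w_i(u)\le\sum_{u\in V_x}s_i(u)=s_{i+1}(x)$. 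This is exactly the hypothesis that makes Lemma~\ref{lemmaVC} applicable at every level.

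For correctness I would argue by induction along the recursion, from the base graph upward. At the deepest level the algorithm enumerates all $S\subseteq V(\Hd)$, keeps those that are vertex covers of $\Hd$, and returns one of minimum $Cost$; by definition this is an optimal $2$-WVC of $(\Hd,w_d,s_d)$. For the inductive step I would apply Lemma~\ref{lemmaVC} with $H=\Hi1$ and $H'=\Hii$, for $i=d,d-1,\dots,1$: whenever the call on $\Hii$ returns an optimal $2$-WVC of $(\Hii,w_i,s_i)$, the set assembled at lines~11--14 of the call on $\Hi1$ is an optimal $2$-WVC of $(\Hi1,w_{i-1},s_{i-1})$. Hence the set $C$ returned by the top-level call is an optimal $2$-WVC of $(G,w_0,s_0)$, where $w_0\equiv 0$ and $s_0\equiv 1$. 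For these weights $Cost(C)=\sum_{v\in C}1+\sum_{v\notin C}0=|C|$, so an optimal $2$-WVC of this instance is precisely a minimum vertex cover of $G$, which is therefore what the algorithm outputs.

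For the running time I would reason exactly as in the proof of Theorem~\ref{teo-dom}. Because $|V(\Hd)|=\dnd(G)$, the brute-force search at the base of the recursion examines $2^{\dnd(G)}$ subsets, and for each of them testing whether it is a vertex cover and evaluating its cost takes polynomial time; hence this step requires time $O(2^{\dnd(G)})$. At every other level the type partition of $\Hii$ and its type graph are obtained in polynomial time (the type partition being polynomial-time computable), the weight functions at lines~7--9 and the reconstruction of the solution at lines~11--14, including the $\argmax$ at line~14, are computed in linear time, and there are at most $n$ levels; summing the contributions yields \com{2^{\dnd(G)}} running time. I do not anticipate any genuine difficulty beyond bookkeeping: the two points that deserve care are the invariance of $w\le s$ under the recursion (already addressed above, and what licenses Lemma~\ref{lemmaVC} at each step) and the observation that the initialization $w\equiv 0$, $s\equiv 1$ collapses $2$-WVC to ordinary vertex cover.
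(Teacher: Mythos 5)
Your proposal is correct and follows essentially the same route as the paper's (much terser) proof: recursion along the type graph sequence, brute force at the base graph, Lemma~\ref{lemmaVC} for the inductive step, and the observation that $w\equiv 0$, $s\equiv 1$ makes $Cost(C)=|C|$ so that 2-WVC collapses to minimum vertex cover. Your explicit check that the invariant $w_i\le s_i$ is preserved by lines~7--9 is a detail the paper leaves implicit, and it is a worthwhile addition since it is what licenses the application of Lemma~\ref{lemmaVC} at every level.
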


\begin{proof}
The algorithm recursively constructs graphs in the type graph sequence 
of $G$, until a base graph is obtained.
When $H$ is a base graph then \VC$(H,w,s)$ searches by brute force 
the set $C$ that is the 2-WVC respect to $s(\cdot)$ and $w(\cdot)$, 
as computed by the algorithm, and returns it. 
\end{proof}

\section{Conclusion} 
We introduced a novel parameter, named iterated type partition  and  examined some of its properties. We show that the Equitable Coloring problem is W[1]-hard when parametrized by the
iterated type partition. This result extends also to the  modular-width parameter. We also prove that the hardness drops for the neighborhood diversity parameter, when the problem becomes FPT. Moreover, we presented a  general strategy that enables to find FPT algorithms for several problems when parameterized by iterated type partition.  {Algorithms for Dominating set, Vertex coloring and Vertex cover problems have been presented, while the algorithms for Clique and Independent set  problems will appear in the extended version of the work.}
It would be interesting to investigate if the proposed strategy can be applied on other problems. 
As a direction for future research, it would be interesting to analyze the Edge dominating set problem, which has been shown to be FPT with the neighborhood diversity parameter \cite{L}.


\begin{thebibliography}{999} 

\bibitem{ALM+}F.N. Abu-Khzam, S. Li, C. Markarian, F. Meyer auf der Heide, P. Podlipyan. Modular-Width: An Auxiliary Parameter for Parameterized Parallel Complexity. In: Xiao M., Rosamond F. (eds) Frontiers in Algorithmics. FAW 2017. Lecture Notes in Computer Science, vol 10336. Springer,  (2017).

\bibitem{CBFGR}R. Belmonte, F.V. Fomin,  P.A. Golovach, M.S. Ramanujan. Metric Dimension of Bounded Width Graphs. In: Italiano G., Pighizzini G., Sannella D. (eds) Mathematical Foundations of Computer Science  (MFCS '15),LNCS vol 923, Springer, (2015).

\bibitem{BF} H.L. Bodlaender and F.V. Fomin. 
Equitable colorings of bounded treewidth graphs. In Theoretical Computer Science,  349, 22--30,  doi: https://doi.org/10.1016/j.tcs.2005.09.027, (2005).


\bibitem{CDH}E.J. Cockayne, R.M. Dawes, S.T. Hedetniemi. Total domination in graphs. In 
Networks 10(3), 211--219, (1980).


\bibitem{CGRV}	G. Cordasco, L. Gargano, A. A. Rescigno, U. Vaccaro. 
Evangelism in social networks: Algorithms and complexity. In Networks 71(4): 346--357, (2018).

\bibitem{CDP} D. Coudert, G. Ducoffe,  A. Popa. 
 Fully polynomial FPT algorithms for some classes of bounded clique-width graphs. In Proceedings of the Twenty-Ninth Annual ACM-SIAM Symposium on Discrete Algorithms  (SODA '18), 2765--2784, (2018).

\bibitem{Courcelle} B. Courcelle. The monadic second-order logic of graphs. I. Recognizable sets of
finite graphs, Inf. Comput., 85(1), 12--75, (1990).

\bibitem{Courcelle2000} B. Courcelle, J.A. Makowsky, U. Rotics. Linear Time Solvable Optimization Problems on Graphs of Bounded Clique-Width. In Theory of Computing Systems, 33(2), 125--150, (2000).



\bibitem{DF}
 {R.G. Downey and M.R. Fellows}. 
{ {Parameterized Complexity}}, {Springer}, {(2012)}.

\bibitem{DK} M. Doucha, J. Kratochv\'\i l. Cluster Vertex Deletion: A Parameterization between Vertex Cover and Clique-Width. MFCS 2012, 348--359, (2012).

\bibitem{DKT16}
{P. Dvor\'ak, D. Knop, and T. Toufar. 
  Target Set Selection in Dense Graph Classes.  	\textit{arXiv:1610.07530}, (2016).}

\bibitem{FLMRS}
M.R. Fellows, D. Lokshtanov, N. Misra, F. A. Rosamond, and S.
Saurabh. Graph layout problems parameterized by vertex cover. 
In Proceedings of ISAAC 2008,  vol. 5369 LNCS, pp. 294–305, (2008).
\bibitem{Fel11} 
M.R. Fellows, F.V. Fomin, D.Lokshtanov, F.Rosamond, S.Saurabh, S.Szeider, and C.Thomassen. 
On the complexity of some colorful problems parameterized by treewidth. Information and Computation,
209(2), 143--153, (2011). 

\bibitem{FRRS} M.R. Fellows, F.A. Rosamond, U. Rotics, S. Szeider. Clique-width is NP-
complete. SIAM J. Discr. Math. 23(2), 909--939, (2009).


\bibitem{FGK} J. Fiala, P.A. Golovach, J. Kratochvil. Parameterized Complexity of Coloring
Problems: Treewidth versus Vertex Cover, Theor. Comp. Sc.  412, 2513--2523, (2011).
	%
	
\bibitem{Fomin2009}	
F.V. Fomin, P.A. Golovach, D.Lokshtanov and S. Saurabh. Clique-width: on the price of generality. In proceedings of SODA, (2009).
	

\bibitem{FGKKK}
J. Fiala, T. Gavenciak, D. Knop, M. Koutecky, J. Kratochv\'il.  
Fixed parameter complexity of distance constrained labeling and uniform channel assignment problems.
\texttt{arXiv:1507.00640}, (2015).


\bibitem{FGLS} F.V. Fomin, P. Golovach, D. Lokshtanov, S. Saurabh.
Intractability of clique-width parameterizations.
SIAM J. Comput., 39 (5), 1941--1956, (2010). 


\bibitem{FLMT}
F.V. Fomin, M. Liedloff, P. Montealegre, I. Todinca.   Algorithms Parameterized by Vertex Cover and Modular Width, through Potential Maximal Cliques. In: Ravi R., Gørtz I.L. (eds) Algorithm Theory – SWAT 2014.  LNCS vol 8503, Springer, (2014).


\bibitem{GLO} 
{J. Gajarsk{\'{y}}, M. Lampis, S. Ordyniak}. 
 Parameterized Algorithms for Modular-Width. 
 In: Gutin G., Szeider S. (eds) Parameterized and Exact Computation. IPEC 2013. LNCS vol 8246, Springer, (2013).

\bibitem{Gallai} T.  Gallai.  Transitiv orientierbare Graphen, Acta Math. Acad. Sci. Hung. 18, 26--66, (1967).

\bibitem{G}
 R. Ganian.
Using neighborhood diversity to solve hard problems.  
\texttt{arXiv:1201.3091}, (2012).

\bibitem{GR}
 {L. Gargano and A.A. Rescigno}.
 {Complexity of conflict-free colorings of graphs}.
 { Theoretical Computer Science }, 566,  {39--49}, (2015).

\bibitem{Gomes} 
G. de C. M. Gomes, C. V. G. C. Lima, V. F. dos Santos.
Parameterized Complexity of Equitable Coloring. 
Discrete Mathematics \& Theoretical Computer Science, (2018).

 


\bibitem{CGRV18}
{{G. Cordasco,L. Gargano A.A. Rescigno, U. Vaccaro}.
Evangelism in social networks: Algorithms and complexity(Article)
Networks,
Vol. 71(4), pp. 346-357 (2018).}
 
\bibitem{GLS} G. de C.M. Gomes, C.V.G.C. Lima, V.F. dos Santos. 
Parameterized Complexity of Equitable Coloring. 
To appear in Discrete Mathematics and Theoretical Computer Science, (2019).


\bibitem{Jansen} K. Jansen, S. Kratsch, D. Marx I. Schlotter.  Bin packing with fixed number of bins revisited. Journal of Computer and System Sciences 79(1),39--49, (2013).



\bibitem{Kn} D. Knop. Partitioning graphs into induced subgraphs.
To appear in Discrete Applied Mathematics, (2019).


\bibitem{K}
M. Kouteck\'y. Solving hard problems on Neighborhood Diversity. Master Thesis, Charles University in Prague (2013).

	%
\bibitem{L}
M. Lampis. Algorithmic meta-theorems for restrictions of treewidth. Algorithmica 64, 19--37, (2012).  In: Proc. Eur. Sym. on  Alg. (ESA), 549--560, (2010). 
	%
\bibitem{Len}
H.W. Lenstra. Integer programming with a fixed number of variables. 
Mathematics of Operations Research, 8 (4), 538--548, (1983). 
%
\bibitem{N}
R. Niedermeier.  Invitation to Fixed-Parameter Algorithms. Oxford University Press, (2006).


\bibitem{romanek} M. Románek. Parameterized algorithms for
modular-width. Bachelor Thesis, Masaryk University, Brno. https://is.muni.cz/th/tobmd/Thesis.pdf, (2016).



 \bibitem{TCH+} M. Tedder,  D.G. Corneil, M. Habib,  C. Paul.
 Simpler linear-time modular decomposition via recursive factorizing permutations. In: Aceto, L., Damgard, I.,
Goldberg, L.A., Halld\'orsson, M.M., Ing\'olfsd\'ottir, A., Walukiewicz, I. (eds.) ICALP
2008, Part I. LNCS  vol. 5125,  634--645, Springer,  (2008).
\end{thebibliography}
\end{document}